\documentclass[11pt, a4paper]{article}
\usepackage{fullpage}
\usepackage{amsmath,amssymb,amsthm,mathtools,cite,comment,graphicx}
\usepackage{hyperref}

\hypersetup{hidelinks}

\usepackage{tikz}
\usetikzlibrary{arrows.meta, positioning,calc}

\theoremstyle{plain}
\newtheorem{theorem}{Theorem}[section]
\newtheorem{lemma}[theorem]{Lemma}

\theoremstyle{definition}

\theoremstyle{remark}
\newtheorem{remark}[theorem]{Remark}

\newcommand{\problemdef}[3]{
    \begin{center}
    \fbox {   \parbox[c]{0.95\textwidth}{
        \textsc{\large #1} 
        
         \textbf{Input:} #2 \\
         \textbf{Goal:} #3 
        }}
    \end{center} 
}

\newcommand{\cP}{\mathcal{P}}
\newcommand{\cQ}{\mathcal{Q}}
\newcommand{\cS}{\mathcal{S}}

\title{A Linear-Time Algorithm\\for Finding an Odd/Even Cycle Through Two Specified Vertices}
\author{
Takumi Kano\thanks{Namiki Secondary School, Ibaraki, Japan.}
\and
Yutaro Yamaguchi\thanks{Osaka University, Osaka, Japan. Email: \texttt{yutaro.yamaguchi@ist.osaka-u.ac.jp}}
}
\date{\empty}

\graphicspath{{figs/}}

\begin{document}

\maketitle
\thispagestyle{empty}

\begin{abstract}
We present a deterministic linear-time algorithm for finding an odd/even cycle through two specified vertices in an undirected graph.
This is shown in a generalized form as follows:
Let $\Gamma$ be any group in which every element is of order at most $2$.
For a given $\Gamma$-labeled graph with two specified vertices (or edges), we can determine in linear time whether there exist two cycles with distinct labels that are through both of the two specified vertices (or edges), and find such cycles if yes.

\medskip

\noindent \textbf{Keywords:} Cycle through specified vertices, Parity constraints, Group-labeled graphs, SPQR-trees
\end{abstract}

\newpage
\pagenumbering{roman}
\tableofcontents
\newpage
\pagenumbering{arabic}
\setcounter{page}{1}

\section{Introduction}
In this paper, we show the following:

\begin{theorem}\label{thm:odd}
    There exists a deterministic algorithm to test for a given undirected graph of $n$ vertices and $m$ edges with two specified vertices $s$ and $t$, whether a cycle of each parity (odd/even length) through both $s$ and $t$ exists or not, and if exists, to find such a cycle in $\mathrm{O}(n + m)$ time.
\end{theorem}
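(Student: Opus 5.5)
We reduce Theorem~\ref{thm:odd} to the group-labeled problem from the abstract with $\Gamma=\mathbb{Z}/2\mathbb{Z}$: label every edge by $1$, so the label of a cycle is its length mod $2$. A cycle through $s$ and $t$ is the union of two internally disjoint $s$--$t$ paths. If $s$ and $t$ are adjacent, we subdivide or replace the edge $st$ by a suitable gadget; this keeps the graph linear in size and preserves parities after relabeling. The task is then: decide whether there is a pair of internally disjoint $s$--$t$ paths of total label $0$, and one of total label $1$.

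\textbf{Step 1 (one cycle).} We find a single cycle through $s$ and $t$ in linear time, e.g.\ by one augmenting-path max-flow of value $2$ between $s$ and $t$. If none exists, $s,t$ lie in different blocks and we answer no for both parities. Otherwise only the block $B$ containing $s,t$ matters, and it is computed in linear time. The found cycle $C_0$ has some label $\gamma_0$. It remains to decide whether a cycle through $s,t$ in $B$ with label $\neq\gamma_0$ exists, and to find it.

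\textbf{Step 2 (structure via SPQR-tree).} We add a virtual edge $st$ if needed, making $B+st$ $2$-connected, and build its SPQR-tree in linear time (Hopcroft--Tarjan / Gutwenger--Mutzel). Cycles through $s$ and $t$ correspond to cycles through the edge $e=st$ in $B+st$. We root the tree at a node containing $e$. Each virtual edge $f=\{u,v\}$ stands for a split component, and a cycle through $e$ that uses $f$ traverses that component by a $u$--$v$ path. For each child component we compute bottom-up the set of labels achievable by $u$--$v$ paths through it, capped at two distinct values since $|\Gamma|$ matters only up to ``one label'' vs.\ ``two labels'' for this question. We also store witness paths implicitly, as a pointer structure to avoid quadratic cost. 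The combination rules per node type are:
\begin{itemize}
\item S-node: sumset of the children's label sets along the cycle.
\item P-node: the union over branches.
\item R-node: a 3-connected skeleton in which each virtual edge carries a label set.
\end{itemize}
Because every element has order $\le 2$, path labels are independent of direction, and in a group of exponent $2$ the ``sum'' is well defined without orientation issues.

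\textbf{Step 3 (R-nodes, the main obstacle).} For an R-node we must decide whether there exist $u$--$v$ paths (or cycles through the parent virtual edge) of two different labels. The key ingredient should be a lemma for $3$-connected graphs, a structural theorem in the spirit of the Lovász/Chudnovsky et al.\ characterization of non-zero paths. In a $3$-connected $\Gamma$-labeled graph with $\Gamma$ of exponent $2$, all cycles through a fixed edge have the same label iff the labeling is \emph{balanced}, i.e.\ switching-equivalent to zero, after treating edges whose label set has two values as ``flexible''. If a flexible edge or an unbalanced cycle exists, then some cycle through the parent edge has the other label. Balance is testable in linear time by a BFS potential assignment. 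The hard part is proving this lemma, namely that $3$-connectivity lets us reroute any unbalanced cycle to pass through the given edge while keeping the label difference. I would do this via a Menger-type argument: take two disjoint paths from the unbalanced cycle $D$ to the ends of $e$ (fan lemma), and compare the two routes around $D$. This is exactly where exponent~$2$ is used, since otherwise commutators of labels can cancel. We also need the analogous claim when the parent edge is a virtual edge of a flexible sibling.

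\textbf{Step 4 (construction).} Once the decision is positive, we descend the tree and expand witnesses. In an R-node we explicitly build the rerouted cycle from Step 3, using linear-time path searches in that skeleton only. Skeleton sizes sum to $O(n+m)$, so the whole algorithm is linear.
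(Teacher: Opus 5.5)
Your Step~2 rests on a false correspondence. A cycle through the added edge $e=st$ in $B+st$ is $e$ together with a \emph{single} $s$--$t$ path of $B$. A cycle through $s$ and $t$ in $B$ is the union of \emph{two} internally disjoint $s$--$t$ paths, and the two label sets differ. For example, let $B$ be a $5$-cycle with all labels $1\in\mathbb{Z}_2$ and $s,t$ at distance $2$: the $s$--$t$ paths have both parities, yet the only cycle through $s$ and $t$ is $B$ itself, which is odd. So Steps~2--3 decide the labels of pole-to-pole paths (equivalently, of cycles through one parent edge), and that easy problem is already solved directly by Lemma~\ref{two-labels}, with no SPQR-tree needed. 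Correspondingly, your Step~3 lemma (a $2$-fan from an unbalanced cycle $D$ to the two ends of one edge) is fine, but it uses nothing about exponent~$2$, and it should say ``$G-e$ balanced'' rather than ``$G$ balanced''. In your rooted tree every non-root node does only need pole-to-pole labels. The whole difficulty then concentrates at the root when it is an R-node: there you must decide whether cycles of the skeleton that avoid $e$ and pass through \emph{both} $s$ and $t$ have two labels. That is the original problem on a triconnected graph, and the proposal does not address it.

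This is exactly the nontrivial core of the paper. There it is phrased with two distinguished edges $\ell_i,r_i$ of an R-node on the tree path from $\mathrm{Q}(e_s)$ to $\mathrm{Q}(e_t)$, after splitting $s$ and $t$ into edges. Your heuristic ``a flexible edge or an unbalanced cycle yields the other label'' fails in that setting. In the triangular prism with $e_1,e_2$ two of the three rungs, the third rung lies on no cycle through both $e_1$ and $e_2$, since such a cycle crosses the rung cut an even number of times. Turning that rung into a two-label virtual edge therefore changes nothing. Figure~\ref{fig:mod3-counterexample} also shows that for $\mathbb{Z}_3$ even a non-zero cycle of length at least three in $G-\{e_1,e_2\}$ does not suffice.

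The missing key lemma is the characterization for triconnected $G$ and disjoint $e_1=x_1y_1$, $e_2=x_2y_2$: two labels occur if and only if $G-\{e_1,e_2\}$ has a non-zero cycle of length at least three. You also need a linear-time constructive proof of the ``if'' direction, and that is where the work lies:
\begin{itemize}
\item relax the goal to two vertex-disjoint path pairs linking $\{x_1,y_1,x_2,y_2\}$, in any pairing, with distinct labels;
\item convert non-crossing pairs to crossing ones via a $3$-linkage between the two paths;
\item run a case analysis on how many of $x_1,y_1,x_2,y_2$ lie on the non-zero cycle, using chord-splitting and fan arguments.
\end{itemize}
Exponent~$2$ enters because segments shared by the compared path pairs cancel ($2\alpha=0$). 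Without such an argument your algorithm does not decide the stated problem.
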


This extends a linear-time algorithm for finding a cycle through three specified vertices in a given undirected graph \cite{fleischner1992detecting}.
To see this, let $G$ be an undirected graph, and $x, y, z$ be three distinct vertices in $G$.
We construct from $G$ an auxiliary graph $\tilde{G}$ as follows:
subdivide each edge in $G$ with exactly one extra vertex in the middle, split the vertex $z$ into two copies $z_1$ and $z_2$ with the same neighbors as the original $z$, and add an extra edge between $z_1$ and $z_2$.
Then, a cycle in $\tilde{G}$ is odd if and only if it is through the extra edge $\{z_1, z_2\}$.
Thus, an odd cycle through both $x$ and $y$ in $\tilde{G}$ corresponds to a cycle through all of $x$, $y$, and $z$ in $G$.

In order to prove Theorem~\ref{thm:odd}, we actually solve a more general problem in group-labeled graphs, which is formally defined in Section~\ref{sec:group-labeled}.
Our main theorem is stated as follows:

\begin{theorem}\label{thm:main}
    Let $\Gamma$ be a group in which every element is of order at most $2$ (i.e., self-inverse).
    There exists a deterministic algorithm to determine for a given biconnected $\Gamma$-labeled graph of $n$ vertices and $m$ edges with two specified vertices $s$ and $t$ (or two specified edges $e_s$ and $e_t$), whether the label of a cycle through both $s$ and $t$ (or $e_s$ and $e_t$) is unique or not in $\mathrm{O}(n + m)$ time.
    Moreover, if it is not unique, the algorithm outputs two such cycles with distinct labels.
\end{theorem}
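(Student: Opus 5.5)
We prove Theorem~\ref{thm:main} by reducing it to a structural characterization of \emph{uniqueness}, which we then check recursively along the SPQR-tree of the biconnected graph. The two-edge version is the basic case. Two specified vertices $s,t$ reduce to it: split $s$ into $s_1,s_2$ joined by a new edge $e_s$ with identity label, distributing incident edges over all possible ways. Since this cannot be done directly, we instead use the two-vertex version only through the SPQR structure, where vertices correspond to poles of skeletons.

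Because every element of $\Gamma$ is self-inverse, $\Gamma$ is abelian and labels of closed walks do not depend on orientation or starting point. Hence the label of a cycle $C$ is simply the product $\psi(C)=\prod_{e\in C}\psi(e)$. Moreover, a symmetric difference of cycles has label equal to the product of their labels.

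\textbf{Step 1 (local characterization).} Let $C_0$ be one cycle through $e_s,e_t$, found in linear time from two disjoint paths given by biconnectivity. We claim that the label is unique iff every ``$C_0$-ear'' that could reroute $C_0$ while preserving $e_s,e_t$ has the same label as the segment it replaces. By shifting labels along a spanning tree (vertex switching), we may first normalize so that all edges of $C_0$ have identity label. Uniqueness then means every cycle through $e_s,e_t$ has identity label.

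\textbf{Step 2 (SPQR decomposition).} Build the SPQR-tree in linear time (Hopcroft--Tarjan / Gutwenger--Mutzel) and take the path $P$ of tree nodes between the node containing $e_s$ and the node containing $e_t$. Components hanging off $P$ behave like single virtual edges with pole pair $\{u,v\}$. Each such component is replaced by a virtual edge, and we need to know whether all $u$--$v$ paths through it share one label (``balanced'') or not. This is computed bottom-up: an S-node composes its children, a P-node requires its children to be balanced and equal, and an R-node (3-connected) is balanced iff its whole skeleton with virtual labels is balanced. Along $P$ itself we check, node by node, whether there are two cycles through the two relevant (virtual) edges with different labels.

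The 3-connected R-node case is the main obstacle. The key lemma to prove is the following. In a 3-connected $\Gamma$-labeled graph with edges $e_s,e_t$, if the graph is \emph{unbalanced} (some cycle has non-identity label after normalization), then there exist two cycles through $e_s,e_t$ with distinct labels. The exception is a small explicit family, such as the unbalanced part being confined to configurations like $K_4$ where every cycle through both edges is forced. My first attempt at a proof: take an unbalanced cycle $D$; by 3-connectivity, Menger's theorem gives linkages from $e_s,e_t$ to $D$. Using the fact that the symmetric difference of two cycles through $e_s,e_t$ differing on $D$ carries $\psi(D)\neq 1$, we obtain two rerouted cycles whose labels differ by $\psi(D)$. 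Handling the case where the linkage forces $D$ to be traversed entirely is where the exceptions appear, and these must be enumerated.

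\textbf{Step 3 (linear time and output).} Each skeleton is processed in time linear in its size. Balance is tested by a BFS potential assignment, and a witness unbalanced cycle is found from a non-tree edge violating potentials. The required linkages are found by a constant number of augmenting-path searches. Summing over the tree gives $\mathrm{O}(n+m)$. To output the two cycles, we expand virtual edges back into real paths, choosing within each child a path of the prescribed label; these were recorded during the bottom-up pass.
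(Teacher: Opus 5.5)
Your SPQR-tree outline (summarize off-path subtrees, then solve a local problem at each node on the tree path between the two Q-nodes) matches the paper's reduction. However, the one nontrivial part, the R-node case, is not proved, and the key lemma you propose for it is false. Unbalancedness of a triconnected graph does not force two labels, and the exceptions do not form a small explicit family.

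Here are two counterexamples.
\begin{itemize}
\item Take any triconnected graph with all labels $0$ and set $\psi(e_s)=\psi(e_t)=\alpha\neq 0$. Every cycle through both edges has label $2\alpha=0$, so your normalization of $C_0$ even goes through. Yet the graph stays unbalanced: $G-e_t$ is biconnected, so it has a cycle through $e_s$, and that cycle has label $\alpha$.
\item Take the triangular prism with rungs $e_s$ and $e_t$, and double the third rung with labels $0$ and $\alpha$. The resulting non-zero $2$-cycle can never be used by a cycle through $e_s,e_t$, since such a cycle would cross the rung cut an odd number of times.
\end{itemize}
The correct criterion is \eqref{eq:decision}: two labels exist iff $G-\{e_1,e_2\}$, with all parallel copies removed, has a non-zero cycle of length at least three.

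Even with this criterion, your sketch of the constructive direction breaks exactly where the work lies. Triconnectivity only guarantees three, not four, disjoint paths between $\{x_1,y_1,x_2,y_2\}$ and $V(D)$, and these must also avoid $e_1,e_2$. Even when four exist, rerouting along the two arcs of $D$ yields two path systems realizing \emph{different} pairings of the terminals. One of them may be the non-crossing pairing $x_1$--$y_1$, $x_2$--$y_2$, which together with $e_1,e_2$ gives two disjoint cycles rather than one cycle through both edges. So $\psi(D)$ is not a difference of two admissible cycle labels.

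The paper handles this in two steps.
\begin{itemize}
\item It relaxes to path pairs of arbitrary pairing and proves Lemma~\ref{lem:2-path-transform}: a $3$-linkage between the two paths of a non-crossing pair yields either two crossing pairs of distinct labels or a crossing pair with the same label.
\item It then runs a case analysis on $|V(C)\cap\{x_1,y_1,x_2,y_2\}|$. It repeatedly replaces $C$ by a non-zero half obtained by splitting along a chord, until Lemmas~\ref{lem:chord}, \ref{lem:4disjoint}, or~\ref{lem:3distinct-plus} applies, or the terminal Case~4 does (four path pairs whose labels sum to $\psi(C)\neq 0$).
\end{itemize}
Nothing in your proposal plays these roles, so correctness on R-nodes is unsupported.

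There are also smaller gaps.
\begin{itemize}
\item The vertex version needs no detour through the SPQR-tree. That detour would not work anyway, since $s$ and $t$ can lie in many skeletons and there is no canonical tree path between them. Instead, after subdividing any $s$--$t$ edges, replace $s$ by $s_1,s_2$ joined by a label-$0$ edge, and duplicate \emph{every} edge $su$ into $s_1u$ and $s_2u$ with the same label; do the same for $t$. No distribution of edges is needed, because a cycle through both new edges leaves $s_1,s_2$ toward two distinct neighbours and so contracts to a cycle through $s$ and $t$.
\item For off-path subtrees, a balanced/unbalanced flag is not enough. You must store up to two distinct $a_\nu$--$b_\nu$ path labels with realizing paths (Lemma~\ref{two-labels}) and represent an unbalanced child by two parallel summary edges. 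Otherwise your R-node balance test and the final reconstruction are undefined.
\item Shifting preserves cycle labels, so all edges of $C_0$ can be made $0$ only when $\psi(C_0)=0$.
\item Your Step~1 ear characterization is stated without proof and never used.
\item The case where the two interface edges of an R-node share an end vertex (Lemma~\ref{lem:shared-endpoint}) is not addressed.
\end{itemize}
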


The assumption that every element of $\Gamma$ is of order at most $2$ is essential.
For signed graphs, that is, for $\Gamma=\mathbb{Z}_2 = \mathbb{Z} / 2\mathbb{Z}$, DeVos and Nurse~\cite{devos2023cycles} recently gave a structural characterization for triconnected graphs to have two cycles $C_1$ and $C_2$ passing through two specified edges such that the signs of $C_1$ and $C_2$ are different.
This characterization (rephrased as follows) plays a key role in our algorithm, for which we give a constructive proof with an efficient algorithm. 

\begin{theorem}[DeVos and Nurse \cite{devos2023cycles} (rephrased)]\label{thm:DN2023}
    Let $G$ be a triconnected $\mathbb{Z}_2$-labeled graph, and $e_1, e_2 \in E(G)$ be disjoint edges that have no parallel edges.
    Then, $G$ contains two cycles $C_1$ and $C_2$ passing through both $e_1$ and $e_2$ such that the labels of $C_1$ and $C_2$ are different if and only if $G - \{e_1, e_2\}$ contains a non-zero cycle of length at least three.
\end{theorem}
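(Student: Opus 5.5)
We prove the two directions separately. Throughout, write $e_1 = u_1v_1$ and $e_2 = u_2v_2$; these four vertices are distinct. A cycle through both $e_1$ and $e_2$ is exactly $e_1 \cup e_2 \cup P \cup Q$, where $P,Q$ are vertex-disjoint paths in $G - \{e_1,e_2\}$ linking $\{u_1,v_1\}$ to $\{u_2,v_2\}$. Its label is $\gamma(e_1)+\gamma(e_2)+\gamma(P)+\gamma(Q)$ in $\mathbb{Z}_2$. So the goal is to decide whether $\gamma(P)+\gamma(Q)$ is constant over all such linkages.

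\textbf{Necessity.} Suppose $C_1 \neq C_2$ are such cycles with different labels. Then the symmetric difference $C_1 \triangle C_2$ avoids $e_1$ and $e_2$ and has label $1$. It is an edge-disjoint union of cycles of $G-\{e_1,e_2\}$, so one of these cycles, say $D$, has non-zero label. If $D$ had length two, it would be a pair of parallel edges $f,f'$ with $f\in C_1$ and $f'\in C_2$ (or the reverse). In that case, choose a different cycle. Suppose every non-zero cycle in the decomposition were a digon. Then replacing each $f'$ in $C_2$ by $f$ yields a cycle $C_2'$ through $e_1,e_2$ whose label still differs from that of $C_1$, and with $C_2' \triangle C_1$ strictly smaller. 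Iterating this, we may assume $C_1\triangle C_2$ contains no digon in its decomposition, and hence a non-zero cycle of length at least three appears. (Length one cannot occur, since a loop cannot lie on a cycle through other edges.)

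\textbf{Sufficiency.} This is the main direction. Let $D$ be a non-zero cycle of length at least three in $H := G-\{e_1,e_2\}$. The plan is to route two disjoint $\{u_1,v_1\}$--$\{u_2,v_2\}$ paths through $D$ in two ways whose labels differ by exactly $\gamma(D)$. Concretely, we use triconnectivity and Menger's theorem, in the form of a fan/linkage lemma. Since $G$ is $3$-connected, $H$ is still reasonably connected: removing two edges lowers connectivity by at most one per edge, and no parallel edges means $e_i$ is essential. We find four disjoint paths from $\{u_1,v_1,u_2,v_2\}$ to $V(D)$, or as many as connectivity allows. Suppose these paths hit $D$ in the cyclic order $a_1,a_2,b_1,b_2$, where $a_i$ is reached from $\{u_1,v_1\}$ and $b_i$ from $\{u_2,v_2\}$. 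Then the two arcs of $D$ can be paired with these paths in two different ways, giving linkages $(P,Q)$ and $(P',Q')$ with
\[
\gamma(P)+\gamma(Q)+\gamma(P')+\gamma(Q') = \gamma(D) = 1 .
\]
This produces two cycles with different labels.

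\textbf{Main obstacle.} The difficulty is that four disjoint paths to $D$ may not exist, since $H$ is only about $1$-connected in bad spots. It is also possible that the attachment order on $D$ is interleaved as $a_1,b_1,a_2,b_2$, in which case only one pairing works. I would handle this by a minimal-counterexample argument. Choose $D$ and the connecting structure to minimize the part of the graph outside $D$. Then use $3$-connectivity of $G$ to reroute: if some arc of $D$ between attachment points is separated by a $2$-cut, the edges $e_1,e_2$ together with triconnectivity supply a bridge that creates an alternative non-zero cycle with better attachments. Equivalently, the cycle space generated by the "switchable" parts must contain $D$. A cleaner route may be the following parity argument. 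Take any $\mathbb{Z}_2$-potential $\pi$ on $H$ that makes every cycle zero, which exists if and only if $H$ has no non-zero cycle. Show that if all linkage labels agree, one can construct such a potential, apart from digons, contradicting the existence of $D$. I would try the rerouting approach first, with a careful case split on how many of $u_1,v_1,u_2,v_2$ lie on $D$.
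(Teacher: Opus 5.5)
Both directions have genuine gaps.

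\textbf{Necessity.} Your digon-replacement step goes the wrong way. For a non-zero digon $\{f,f'\}$ we have $\gamma(f)+\gamma(f')=1$, so swapping $f'$ for $f$ in $C_2$ flips its label. Under your supposition the non-zero digons in the decomposition are odd in number, so $C_2'$ ends up with the \emph{same} label as $C_1$.

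More fundamentally, this direction cannot follow from a symmetric-difference argument alone, because it fails without triconnectivity, which you never use. Take a cycle through $e_1$ and $e_2$ and double one other edge, with labels $0$ and $1$. The two cycles through $e_1,e_2$ have different labels, yet the only non-zero cycle of $G-\{e_1,e_2\}$ is a digon.

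The paper's argument is global and uses triconnectivity:
\begin{itemize}
\item If all non-zero cycles of $G'=G-\{e_1,e_2\}$ are digons, each non-zero digon is a block of $G'$ by itself. A third vertex of its block would give, via a $2$-fan, two cycles of length at least three differing only in $f$ versus $f'$, and one of them would be non-zero.
\item Hence the digon becomes a bridge after merging. Triconnectivity of $G$ forces this bridge to separate one end of $e_1$ and one end of $e_2$ from the other two.
\item So no pair of disjoint paths forming a cycle through $e_1,e_2$ can use these edges. Deleting them leaves a balanced graph, and shifting shows the label is unique.
\end{itemize}

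\textbf{Sufficiency.} Your core step is wrong as stated. With attachments in cyclic order $a_1,a_2,b_1,b_2$, the arcs $a_1a_2$ and $b_1b_2$ give a $u_1$--$v_1$ path and a $u_2$--$v_2$ path. These close up with $e_1,e_2$ into two disjoint cycles, not one cycle through both edges; only the arcs $a_2b_1$ and $b_2a_1$ form a linkage. It is the interleaved order $a_1,b_1,a_2,b_2$ in which both pairings are linkages, so you have the good and bad cases reversed.

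The missing idea the paper uses here has two parts:
\begin{itemize}
\item Relax the target to two path pairs of distinct labels realizing \emph{any} of the three pairings.
\item Convert a non-crossing pair $\{P_1,P_2\}$ (Lemma~\ref{lem:2-path-transform}). Triconnectivity of $G$ gives three disjoint $V(P_1)$--$V(P_2)$ paths. In each of the two possible configurations this yields either two crossing pairs with distinct labels, or one crossing pair with the same label as $\{P_1,P_2\}$, because edges used twice cancel when every element has order at most $2$.
\end{itemize}

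Beyond that, when four disjoint paths to $D$ do not exist, your proposal lists strategies (minimal counterexample, rerouting, potentials) but gives no argument. The paper carries out the case split you mention, on how many of $u_1,v_1,u_2,v_2$ lie on $D$. It repeatedly splits $D$ with a suitable chord (one of the two resulting cycles is non-zero) to move to a more favorable case. The terminal cases are closed by:
\begin{itemize}
\item the chord lemma (a $V(D)$--$V(D)$ path through both $e_1$ and $e_2$);
\item the ``three attachments plus one'' lemma;
\item the Case~4 construction of four path pairs whose labels sum to $\gamma(D)$;
\item the successive $2$-fans of Case~5.
\end{itemize}
As written, sufficiency is unproven.
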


\begin{remark}
Indeed, this characterization fails already for triconnected $\mathbb{Z}_3$-labeled graphs, as shown in Figure~\ref{fig:mod3-counterexample}.
While each edge with non-zero label has an orientation, it can be traversed in both directions and the label is negated if it is traversed in the backward direction; for more precise definitions on group-labeled graphs, see Section~\ref{sec:group-labeled}.
Even though $G-\{e_1,e_2\}$ contains a cycle on $\{x_1, x_2, w, v, u\}$ with label $2 - 1 + 0 + 0 + 0 = 1$, which is non-zero, it may still happen that every cycle through the distinguished edges $e_1$ and $e_2$ has the same label $0$.
\end{remark}

\begin{figure}[t]
    \centering
    \scalebox{0.9}{
        \def\svgwidth{0.6\columnwidth}
        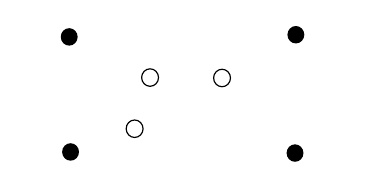
    }

    \caption{A triconnected counterexample for Theorem~\ref{thm:DN2023} when $\Gamma=\mathbb{Z}_3$. The solid edges are directed and labeled by elements of $\mathbb{Z}_3$; the label of a walk is the signed sum of the traversed edge labels, taking sign $+1$ when the walk follows the chosen direction and sign $-1$ otherwise (see Section~\ref{sec:group-labeled} for more details).}
    \label{fig:mod3-counterexample}
\end{figure}

The proof of Theorem~\ref{thm:DN2023} in \cite{devos2023cycles} is by a minimal-counterexample
argument with an extremal choice of a non-zero cycle.
Thus, it does not provide a fast algorithm to find explicit witness cycles even for $\mathbb{Z}_2$-labeled graphs.
Our main contribution is showing Theorem~\ref{thm:main}, which gives a linear-time algorithm to find target cycles for a more general class of group-labeled graphs.

We should remark that a polynomial-time algorithm for our problem (when $|\Gamma|$ is constant) can be derived from \cite{huynh2009linkage,kawarabayashi2011graph}.
In \cite{kawarabayashi2011graph}, Kawarabayashi, Reed, and Wollan gave an $\mathrm{O}(mn\cdot\alpha(m, n))$-time algorithm for finding $k$ vertex-disjoint paths in a given undirected graph between $k$ specified pairs of vertices with a specified parity of each path for any fixed $k$.
It is not difficult to see that, by taking $k = 2$ and checking all of $2^2 = 4$ patterns of parities for each element of a fixed generating set of $\Gamma$, we can solve our problem via this setting.
In \cite{huynh2009linkage}, Huynh also provided a polynomial-time algorithm for finding $k$ vertex-disjoint paths in a given $\Gamma$-labeled graph between $k$ specified pairs of vertices with a specified label for any fixed $k$ and a constant-size group $\Gamma$.
A similar reduction works for this setting.

Nevertheless, their algorithms are built on sophisticated graph minor theory initiated by Robertson and Seymour (in particular, \cite{robertson1995graph}) and do not achieve linear-time computation.
The advantage of our algorithm is that it is rather elementary and runs in linear time.

The remainder of this paper is organized as follows.
Section~\ref{sec:preliminaries} describes necessary definitions, problems, and useful basic facts.
Section~\ref{sec:isolate} isolates a single nontrivial subproblem in triconnected graphs from our problem with the aid of the so-called SPQR-trees, for which we state the linear-time solvability as Theorem~\ref{triconnected-problem}.
Section~\ref{sec:triconnected} proves Theorem~\ref{triconnected-problem}, which completes our linear-time algorithm.

\section{Preliminaries}\label{sec:preliminaries}
Throughout the paper, graphs may have parallel edges, but have no self-loops.

\subsection{Graphs}
Let $G = (V, E)$ be an undirected graph.
An alternating sequence $W = (v_0, e_1, v_1, \dots, e_\ell, v_\ell)$ of vertices and edges in $G$ is called a \emph{walk} in $G$ if $e_i = \{v_{i-1}, v_i\} \in E$ for every $i = 1, 2, \dots, \ell$.
We call $\ell$ the \emph{length} of $W$, and we say that $W$ is \emph{odd} and \emph{even} if $\ell$ is odd and even, respectively.
We say that $W$ is a \emph{path} if all its vertices $v_0, v_1, \dots, v_\ell$ are distinct, and a \emph{cycle} if $\ell \ge 2$, $v_1, v_2, \dots, v_\ell$ are distinct, $v_0 = v_\ell$, and $e_1 \neq e_\ell$.
In particular, a walk of length zero is regarded as a path, which is called \emph{trivial}, and a walk consisting of two distinct parallel edges is regarded as a cycle.
A path is referred to as a $v_0$--$v_\ell$ path (or an $X$--$Y$ path when $v_0 \in X$ and $v_\ell \in Y$) if the end vertices are emphasized.
We denote by $P[v_i, v_j]$ the subpath of a path $P$ from $v_i$ to $v_j$, where the direction is flipped accordingly.

We say that $G$ is \emph{connected} if for every pair of vertices, say $s, t \in V$, there exists an $s$--$t$ path in $G$.
For each integer $k \ge 1$, a \emph{$k$-cut} is a vertex set $X \subseteq V$ with $|X| \le k$ such that its removal results in a disconnected graph $G - X$.
For $k \ge 2$, we say that $G$ is \emph{$k$-connected} if $|V| > k$ and it has no $(k-1)$-cut.
In particular, it is also called \emph{biconnected} when $k = 2$ and \emph{triconnected} when $k = 3$.
A \emph{bridge} is an edge whose removal makes the graph disconnected, and a \emph{block} is a maximal biconnected subgraph or a two-vertex subgraph with at least two parallel edges which becomes a bridge if we merge them into a single edge.
It is well-known that any connected graph has a unique tree structure formed by the blocks and bridges that are connected by the $1$-cuts, called the \emph{block-cut tree}, which can be computed in linear time~\cite{hopcroft1973algorithm}.
Any cycle is included in a single block as it cannot use a bridge.
For any pair of vertices $s$ and $t$ in the same block, there exists a cycle through $s$ and $t$ (by Menger's theorem~\cite{Menger1927}); conversely, such a cycle must be included in the unique block that contains both $s$ and $t$.
A further decomposition of biconnected graphs into triconnected components is also known \cite{hopcroft1973dividing}, and we explain a closely related concept, the so-called \emph{SPQR-trees}, in Section~\ref{sec:SPQR-tree}.

\subsection{Group-Labeled Graphs}\label{sec:group-labeled}
We first introduce a general definition of group-labeled graphs.
Let $\Gamma$ be a group, for which we use the multiplicative notation without assuming commutativity at first.
A \emph{$\Gamma$-labeled graph} is a pair $(\vec{G},\psi)$ of a directed graph $\vec{G}=(V,\vec{E})$ and a label function $\psi\colon\vec{E}\to\Gamma$.
Let $G=(V,E)$ denote the underlying graph of $\vec{G}$, i.e., $E\coloneqq\{\, e=\{u,v\} \mid \vec{e}=uv\in \vec{E}\,\}$, and define $\psi(e,uv)\coloneqq\psi(\vec{e})$ and $\psi(e,vu)\coloneqq \psi(\vec{e})^{-1}$ for each edge $e=\{u,v\}\in E$ with the corresponding arc $\vec{e}=uv\in \vec{E}$.
The \emph{label} of a walk $W = (v_0, e_1, v_1, \dots, e_\ell, v_\ell)$ is defined as
\[\psi(W) \coloneqq \psi(e_1, v_0v_1) \cdot \psi(e_2, v_1v_2) \cdot \dots \cdot \psi(e_\ell, v_{\ell-1}v_\ell).\]
A walk is called \emph{non-zero} if its label is not the identity element of $\Gamma$.
We say that $(\vec{G}, \psi)$ is \emph{balanced} if it contains no non-zero cycle.

In this paper, we focus on the case where every element of $\Gamma$ is of order at most $2$; that is, for any $\alpha \in \Gamma$, we have $\alpha^{-1} = \alpha$.
Then, $\psi(e, uv) = \psi(e, vu)$ for any edge $e = \{u, v\} \in E$, so the orientation of the edges becomes irrelevant.
Thus, we identify the original directed graph $\vec{G}$ with its underlying graph $G$, and each arc $\vec{e} = uv \in \vec{E}$ with its corresponding edge $e = \{u, v\} \in E$.
Also, in this case, $\Gamma$ is automatically abelian.
Indeed, for any $\alpha, \beta \in \Gamma$, we have
$\alpha\beta = (\alpha\beta)^{-1} = \beta^{-1}\alpha^{-1} = \beta\alpha$.
Thus, we use the additive notation hereafter.

Regarding the computation on $\Gamma$, we assume the black-box model (cf.~\cite{iwata2022finding,kawase2020finding}) in which each of the following elementary operations can be performed in constant time:
for each $\alpha, \beta \in \Gamma$, test whether $\alpha = \beta$ and compute $\alpha + \beta$.

Our problems are formally stated as follows (recall that any cycle through $s$ and $t$ is included in the unique block that contains $s$ and $t$, which enables us to assume biconnectivity, and then there exists at least one such cycle):

\problemdef{Two Labels of Cycles Through Two Vertices}
{A biconnected $\Gamma$-labeled graph $(G = (V, E), \psi)$ and two distinct vertices $s,t\in V$.}
{Determine whether there exist two cycles $C_1$ and $C_2$ in $G$ that intersect both $s$ and $t$ with $\psi(C_1) \neq \psi(C_2)$ (or all such cycles have the same label), and if so, find such cycles.}

\problemdef{Two Labels of Cycles Through Two Disjoint Edges}
{A biconnected $\Gamma$-labeled graph $(G=(V,E),\psi)$ and two disjoint edges $e_s,e_t\in E$.}
{Determine whether there exist two cycles $C_1$ and $C_2$ in $G$ that traverse both $e_s$ and $e_t$ with $\psi(C_1)\neq\psi(C_2)$ (or all such cycles have the same label), and if so, find such cycles.}

Theorem~\ref{thm:main} states that these problems admit a deterministic linear-time algorithm (note that when $e_s$ and $e_t$ share an end vertex, the problem is easily solved by Lemma~\ref{two-labels} below; see Lemma~\ref{lem:shared-endpoint} for the complete proof).
If $\Gamma = \mathbb{Z}_2 = \mathbb{Z} / 2\mathbb{Z}$ and $\psi(e) = 1$ for every $e \in E$, the former problem exactly requires to determine the possible parities of cycles in $G$ through both $s$ and $t$ and to find such a cycle of each possible parity.
Thus, Theorem~\ref{thm:odd} holds.

We also note that the two problems are essentially equivalent, i.e., each easily reduces to the other in linear time.
Indeed, from the latter to the former, it is enough to subdivide each of the edges $e_s$ and $e_t$ with an extra vertex $s$ and $t$, respectively, and to assign the original label to one of the two subdivided edges and label $0$ to the other.
This operation preserves biconnectivity and the labels of target cycles.

From the former to the latter, we first assume that $s$ and $t$ are not adjacent by subdividing each edge between $s$ and $t$ (with assigning the same label to one of the resulting edge and label $0$ to the other) if any; this neither changes the problem nor violates the biconnectivity.
Then, split $s$ into two vertices $s_1$ and $s_2$, add an edge $e_s = s_1s_2$ with label $0$, and replace each edge $su$ by two edges $s_1u$ and $s_2u$, both with label $\psi(su)$; do the same at $t$, obtaining $t_1,t_2$, and a new edge $e_t$ with label $0$.
The resulting graph is still biconnected, and contracting $e_s$ and $e_t$ (and merging parallel edges of the same label into a single one) recovers the original graph, so the cycles through $s$ and $t$ correspond exactly to the cycles through $e_s$ and $e_t$, and the labels are preserved.

Thus, in what follows, we concentrate on the latter problem.

\subsection{Basic Facts}
We first observe a shortcut of a path.

\begin{lemma}\label{first-hit}
Let $G=(V,E)$ be a graph, $s \in V$, $T\subseteq V$, and $Q$ be a path from $s$ to a vertex of $T$.
Then, $Q$ has a subpath $Q'$ starting at $s$, ending in $T$, and satisfying $|V(Q') \cap T| = 1$.
\end{lemma}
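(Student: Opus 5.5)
Write $Q = (v_0, e_1, v_1, \dots, e_\ell, v_\ell)$ with $v_0 = s$ and $v_\ell \in T$. The plan is to cut $Q$ at the first vertex it meets in $T$. Let $i \coloneqq \min\{\, j \in \{0, 1, \dots, \ell\} \mid v_j \in T \,\}$. This minimum exists because $v_\ell \in T$, so the index set is nonempty.

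Now set $Q' \coloneqq Q[v_0, v_i] = (v_0, e_1, v_1, \dots, e_i, v_i)$. This is a subpath of $Q$: it is a contiguous piece of a path, so its vertices are still distinct. It starts at $s$ and ends at $v_i \in T$. If $s \in T$, then $i = 0$ and $Q'$ is the trivial path, which the preliminaries explicitly allow.

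It remains to check that $|V(Q') \cap T| = 1$. By minimality of $i$, none of $v_0, \dots, v_{i-1}$ lies in $T$, while $v_i$ does. Since the vertices of $Q$ are distinct, $V(Q') \cap T = \{v_i\}$.

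The argument is routine. The only points needing care are that the minimum is well defined and that the degenerate case $i = 0$ (when $s \in T$) is handled by the trivial path.
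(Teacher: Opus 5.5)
Your proof is correct. Cutting $Q$ at the first vertex $v_i$ that lies in $T$ is exactly the first-hit argument the lemma's name points to; the paper states the lemma without proof as evident. You also correctly handle the degenerate case $s \in T$ using the trivial path.
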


For $s \in V$ and $T \subseteq V \setminus \{s\}$, an $s$--$T$ $k$-fan is a collection of $k$ $s$--$T$ paths such that they only share the vertex $s$ and each path has no inner vertex in $T$.
By Menger's theorem~\cite{Menger1927}, any $k$-connected graph contains a collection of $k$ vertex-disjoint paths between two vertex subsets (as well as a $k$-fan), which can be computed in $\mathrm{O}(k(|V| + |E|))$ time using the Ford--Fulkerson algorithm for maximum flow problem~\cite{ford1956maximal} (see also \cite[Section~9.2]{schrijver2003combinatorial}).

\begin{lemma}\label{extended-menger2}
There exists a deterministic algorithm running in $\mathrm{O}(k(|V|+|E|))$ time that finds for a given $k$-connected graph $G = (V, E)$ and two vertex sets $A, B \subseteq V$ with $|A| \ge k$ (or $A = \{s\}$) and $|B| \ge k$, a collection of vertex-disjoint $A$--$B$ paths (or an $s$--$B$ $k$-fan, respectively) such that all the inner vertices are disjoint from $A \cup B$.
\end{lemma}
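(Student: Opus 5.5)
The plan is to reduce Lemma~\ref{extended-menger2} to a single maximum-flow computation in an auxiliary graph, using Menger's theorem only to certify that enough augmenting paths exist.

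\textbf{Disjoint paths case.} Take $|A|\ge k$ and $|B|\ge k$.

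First build the auxiliary digraph by the standard vertex-splitting construction. Replace each vertex $v$ by an arc $v^-\to v^+$ of capacity $1$. Replace each edge $\{u,v\}$ by the two arcs $u^+\to v^-$ and $v^+\to u^-$, each of capacity $1$.

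Then handle the endpoints. Add a super-source $\sigma$ with arcs $\sigma\to a^-$ for every $a\in A$, and a super-sink $\tau$ with arcs $b^+\to\tau$ for every $b\in B$.

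To force inner vertices to avoid $A\cup B$, delete the arcs leaving $a^+$ into other vertices' $v^-$ whenever the path would re-enter $A$. A simpler way is to make each vertex of $A\cup B$ unusable as an inner vertex: delete all arcs entering $a^-$ for $a\in A$ except the one from $\sigma$, and all arcs leaving $b^+$ for $b\in B$ except the one to $\tau$. With this choice, a path can start only at an $A$-vertex and must stop at the first $B$-vertex it reaches. An integral flow then decomposes into vertex-disjoint $A$--$B$ paths whose inner vertices lie outside $A\cup B$; Lemma~\ref{first-hit} gives the same shortcutting conclusion directly.

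Vertices in $A\cap B$ are trivial paths. Taking these first is harmless, because every other path is vertex-disjoint from them.

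\textbf{Existence of $k$ paths.} Menger's theorem, applied in $G$, shows that the flow value is at least $k$. Any vertex cut separating $A$ from $B$ must have size at least $k$: otherwise, since $|A|,|B|\ge k$, some vertex of $A$ and some vertex of $B$ survive outside the cut, and they lie in different components, contradicting $k$-connectivity.

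Paths produced by the unrestricted Menger argument can be shortened using Lemma~\ref{first-hit}: cut each path at its last $A$-vertex and then at its first subsequent $B$-vertex. This shows that the restricted network still admits $k$ disjoint paths, so the restriction above loses nothing.

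\textbf{Fan case $A=\{s\}$.} Give the arc $s^-\to s^+$ capacity $k$ (or simply source from $s^+$). The same cut argument, now with $s$ deleted from the separator, shows that every $s$--$B$ separator avoiding $s$ has size at least $k$.

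\textbf{Running time and main obstacle.} Ford--Fulkerson stops after $k$ augmentations, each a BFS in $\mathrm{O}(|V|+|E|)$ time, and path extraction is linear. The only delicate point is the correctness of the restriction that inner vertices avoid $A\cup B$. I handle it by the shortcut argument above rather than by analysing the flow network directly.
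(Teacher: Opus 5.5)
Your proposal is correct and follows essentially the same route as the paper, which justifies this lemma only by invoking Menger's theorem for the existence of the $k$ paths (or the $k$-fan) and the Ford--Fulkerson algorithm, stopped after $k$ unit augmentations, for the $\mathrm{O}(k(|V|+|E|))$ bound. You make explicit what the paper leaves implicit: the vertex-splitting network, the arc deletions that force inner vertices to avoid $A\cup B$, and the shortcutting argument showing these restrictions lose nothing (in the fan case you should also state $s\notin B$, which the paper's definition of a fan presumes).
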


It is well-known that the following elementary problems on (general) group-labeled graphs can be solved in linear time (see \cite{kawase2020finding, iwata2022finding} for details in more general form):
\begin{itemize}
    \item[(E1)] test whether a group-labeled graph is balanced or not, and if not, find a non-zero cycle;
    \item[(E2)] find a non-zero cycle through a specified vertex $s$ (if any);
    \item[(E3)] find a non-zero $s$--$t$ path between two specified vertices $s$ and $t$ (if any).
\end{itemize}

Regarding this, we introduce an operation on group-labeled graphs.
Let $(G = (V, E), \psi)$ be a $\Gamma$-labeled graph.
We say that $\psi'$ is obtained from $\psi$ by \emph{shifting at $v \in V$ by $\alpha \in \Gamma$} if
\[\psi'(e) = \begin{cases}
    \psi(e) + \alpha & \text{if $e$ is incident to $v$},\\
    \psi(e) & \text{otherwise}.
\end{cases}\]
By definition, this operation does not change the label of any cycle or any path whose end vertices are not $v$, and changes the labels of all paths one of whose end vertices is $v$ uniformly by $\alpha$.
Two $\Gamma$-labeled graphs are called \emph{equivalent} if one is obtained from the other by a sequence of shifting operations.

The first problem (E1) can be solved based on the following observations (proved for general groups $\Gamma$ in \cite[Section~2.2.2]{yamaguchi2016combinatorial}).

\begin{lemma}[cf.~{\cite[Proposition~2.4]{yamaguchi2016combinatorial}}]\label{lem:tree-gauge-normalization}
    For any connected $\Gamma$-labeled graph $(G = (V, E), \psi)$ and any spanning tree $T$ of $G$, there exists an equivalent $\Gamma$-labeled graph $(G, \psi')$ such that $\psi'(e) = 0$ for every $e \in T$.
    Moreover, such $\psi'$ can be computed in $\mathrm{O}(|V| + |E|)$ time.
\end{lemma}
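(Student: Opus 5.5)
Fix a root $r \in V$ and compute a spanning tree $T$ of $G$ in $\mathrm{O}(|V|+|E|)$ time, say by breadth-first search from $r$. For each vertex $v$, let $T_v$ denote the unique $r$--$v$ path in $T$. Define the potential $p(v) \coloneqq \psi(T_v)$. In the self-inverse (hence abelian, additive) setting used in this paper, this is simply the sum of the labels on $T_v$. For general $\Gamma$ in the multiplicative notation, it is the ordered product along $T_v$ from $r$ to $v$. All potentials can be computed in one traversal of $T$ from the root: set $p(r)=0$, and for a tree edge $e=\{u,v\}$ with $u$ the parent of $v$, set $p(v) = p(u) + \psi(e,uv)$. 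Each step is a single group operation, which costs constant time in the black-box model.

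Next, shift at every vertex $v \neq r$ by $p(v)$. In the general (nonabelian) case, shifting should be read as the two-sided operation $\psi'(e,uv) = p(u)\,\psi(e,uv)\,p(v)^{-1}$. In the abelian case relevant here, this is exactly the composition of the shifts defined above, because each edge $uv$ receives the contributions from both of its end vertices, namely $\psi'(e) = \psi(e) + p(u) + p(v)$, using $\alpha = -\alpha$. By definition, $\psi'$ is obtained from $\psi$ by a sequence of shifting operations, so it is equivalent to $\psi$. It can be computed edge by edge in $\mathrm{O}(|E|)$ time once the potentials are known.

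It remains to check that $\psi'(e) = 0$ for every $e \in T$. Let $e = \{u,v\} \in T$ with $u$ the parent of $v$. By the recursive definition, $p(v) = p(u) + \psi(e)$. Therefore $\psi'(e) = \psi(e) + p(u) + p(u) + \psi(e) = 0$, since every element has order at most $2$. In the multiplicative form the same computation reads $p(u)\psi(e,uv)\bigl(p(u)\psi(e,uv)\bigr)^{-1} = 1$.

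The only subtle point is the orientation and ordering bookkeeping for a nonabelian $\Gamma$, where shifting must act on the correct side of each label. Under the standing self-inverse assumption this difficulty disappears, so the argument is a direct verification. The total running time is $\mathrm{O}(|V|+|E|)$.
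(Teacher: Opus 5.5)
Your proposal is correct. It is the standard potential argument behind the cited Proposition~2.4; the paper gives no proof of its own. You root $T$, let $p(v)$ be the label of the tree path from the root, and shift at each $v$ by $p(v)$, so every tree edge $e$ with $u$ the parent of $v$ gets $\psi(e)+p(u)+p(v)=2\psi(e)+2p(u)=0$. One small point: the lemma is for an arbitrary \emph{given} spanning tree $T$, so you should root and traverse that $T$ rather than compute one by breadth-first search, though nothing in your argument depends on the choice.
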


\begin{lemma}[cf.~{\cite[Proposition~2.5]{yamaguchi2016combinatorial}}]\label{lem:balanced}
    A $\Gamma$-labeled graph is balanced if and only if it is equivalent to a $\Gamma$-labeled graph whose label function is trivial, i.e., all the edges are assigned label $0$.
\end{lemma}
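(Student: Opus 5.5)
We prove Lemma~\ref{lem:balanced}, the characterization of balanced $\Gamma$-labeled graphs, in two directions. Since shifting preserves the label of every cycle and equivalence is defined by sequences of shifts, we may reduce to a statement about a single normalized label function.

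The ``if'' direction is immediate. Suppose $(G,\psi)$ is equivalent to $(G,\psi_0)$, where $\psi_0$ is identically $0$. Each shifting operation leaves the label of every cycle unchanged, so each cycle $C$ satisfies $\psi(C)=\psi_0(C)=0$. Hence $(G,\psi)$ has no non-zero cycle, that is, it is balanced.

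For the ``only if'' direction, assume $(G,\psi)$ is balanced and handle each connected component separately. Shifts at vertices of different components do not interact, so it suffices to treat a connected $G$.
\begin{itemize}
\item Fix a spanning tree $T$. By Lemma~\ref{lem:tree-gauge-normalization}, obtain an equivalent $\psi'$ with $\psi'(e)=0$ for every $e\in T$; this $\psi'$ is still balanced.
\item Take any non-tree edge $e=\{u,v\}$. If $e$ is parallel to a tree edge $f$ between $u$ and $v$, then $e$ and $f$ form a cycle of length two. Otherwise, $e$ together with the $u$--$v$ path in $T$ forms a cycle.
\item In either case, the tree edges contribute $0$, so the label of this cycle equals $\psi'(e)$.
\item Balancedness then forces $\psi'(e)=0$.
\end{itemize}
Thus $\psi'$ is trivial and $(G,\psi)$ is equivalent to it.

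The only point needing care is the second item, namely that the fundamental cycle of a non-tree edge is a genuine cycle in the paper's sense. This is where parallel edges matter: a non-tree edge parallel to a tree edge gives a length-two cycle, which the definitions explicitly count as a cycle. No self-loops occur by assumption, so every non-tree edge falls into one of the two cases. For general (non-abelian) $\Gamma$ the same argument works, because the non-tree edge is the only edge with a possibly non-identity label, so the product equals $\psi'(e)^{\pm1}$ regardless of traversal direction.
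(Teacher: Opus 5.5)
Your proof is correct and takes essentially the same approach as the cited Proposition~2.5, which the paper does not reprove but sets up by stating Lemma~\ref{lem:tree-gauge-normalization} immediately before it: shifting preserves cycle labels, which gives the ``if'' direction, and after normalizing to $0$ on a spanning tree each fundamental cycle has label equal to its non-tree edge, so balancedness forces every label to vanish. Your separate handling of a non-tree edge parallel to a tree edge is harmless but not really a distinct case, since the tree path between its ends is then just that single edge and the fundamental cycle is the length-two cycle.
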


It is easy to see that the second problem (E2) can also be solved by Lemma~\ref{lem:balanced} combined with the block-cut tree and Menger's theorem (such a cycle exists if and only if the block that $s$ belongs to has a non-zero cycle $C$; if $C$ does not contain $s$, one can find an $s$--$V(C)$ $2$-fan in linear time, which splits $C$ as a chord into two cycles through $s$ and at least one is non-zero).
In a similar way, the third problem (E3) can be solved in a slightly more general form as follows:

\begin{lemma}[cf.~{\cite[Proposition~8]{kawase2020finding}}]\label{two-labels}
Let $(G=(V,E), \psi)$ be a $\Gamma$-labeled graph, $s,t\in V$, and $L$ be the set of labels of all $s$--$t$ paths in $G$. Then one can compute, in $\mathrm{O}(|V|+|E|)$ time, a set $X\subseteq L$ and, for each $\lambda\in X$, an explicit $s$--$t$ path $P_\lambda$ with $\psi(P_\lambda)=\lambda$, such that $|X|=\min\{|L|,2\}$.
\end{lemma}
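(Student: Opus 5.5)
We reduce the statement to the balancedness test (E1) by the standard single-source construction. If $s = t$, the only $s$--$t$ path is the trivial one, so $L = \{0\}$ and we output $X = \{0\}$ with the trivial path. Assume $s \neq t$. We compute the block-cut tree in linear time. If $s$ and $t$ lie in different connected components, then $L = \emptyset$ and we output $X = \emptyset$. Otherwise, we take the unique path in the block-cut tree from a block containing $s$ to a block containing $t$. Let $B_1, \dots, B_r$ be the blocks on it and $c_1, \dots, c_{r-1}$ the cut vertices between consecutive blocks. Set $c_0 \coloneqq s$ and $c_r \coloneqq t$.

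\textbf{Structure of $L$.} Every $s$--$t$ path $P$ passes through each $c_i$ in order, and its segment between $c_{i-1}$ and $c_i$ lies inside $B_i$. Conversely, any choice of $c_{i-1}$--$c_i$ paths in the respective $B_i$ concatenates to an $s$--$t$ path, since distinct blocks share only cut vertices. Hence $L = L_1 + \dots + L_r$ (Minkowski sum), where $L_i$ is the set of labels of $c_{i-1}$--$c_i$ paths in $B_i$.

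\textbf{Computing each $L_i$ up to two elements.} In $B_i$ we fix a spanning tree $T_i$ and apply Lemma~\ref{lem:tree-gauge-normalization}; we record the shift amount at each vertex so that path labels can be recovered. We let $P_i$ be the tree path from $c_{i-1}$ to $c_i$. If $B_i$ is balanced, then by Lemma~\ref{lem:balanced} all $c_{i-1}$--$c_i$ paths have the same label, and $L_i = \{\psi(P_i)\}$. Otherwise, (E1) yields a non-zero cycle $C$ in $B_i$. Using biconnectivity of $B_i$ (after adding an auxiliary vertex adjacent to $c_{i-1}$ and $c_i$ if needed), we apply Lemma~\ref{extended-menger2} to obtain two vertex-disjoint paths from $\{c_{i-1}, c_i\}$ to $V(C)$. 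These hit $C$ at two vertices $a$ and $b$ and, together with the two arcs of $C$ between $a$ and $b$, give two $c_{i-1}$--$c_i$ paths. Their labels differ by $\psi(C) \neq 0$, so we obtain two paths with distinct labels. This uses the fact that every element is self-inverse, so the difference of the two arc labels is exactly $\psi(C)$. A degenerate case arises when $B_i$ is a single edge or a bundle of parallel edges: then $L_i$ is simply the set of edge labels, and we pick two distinct ones if they exist.

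\textbf{Combining.} If every $L_i$ is a singleton, then $L$ is a singleton and we concatenate the $P_i$ to get the path. Otherwise, some index $j$ has two paths $Q, Q'$ with distinct labels. Concatenating each of them with fixed paths in all other blocks yields two $s$--$t$ paths whose labels differ by $\psi(Q) - \psi(Q') \neq 0$. Each step is linear in the size of its block, so the total time is $\mathrm{O}(|V| + |E|)$.

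The main obstacle is the fan construction inside a non-balanced block. We must ensure the two disjoint paths from $\{c_{i-1}, c_i\}$ to $V(C)$ really exist and attach to distinct vertices of $C$, including the boundary cases where $c_{i-1}$ or $c_i$ already lies on $C$. Lemma~\ref{extended-menger2} with $k = 2$, applied in the biconnected block, handles this.
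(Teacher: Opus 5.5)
Your proposal is correct and follows essentially the route the paper has in mind: the paper only cites \cite{kawase2020finding} and says (E3) is handled ``in a similar way'' to (E2), i.e., via the block-cut tree, a balancedness test in each block on the $s$--$t$ route, and two disjoint linking paths to a non-zero cycle whose two arcs give distinct labels. Two small details should be made explicit. First, truncate the linking paths by Lemma~\ref{first-hit} so that each meets $C$ only at its end; Lemma~\ref{extended-menger2} only controls inner vertices, so it does not force the path from $c_{i-1}\in V(C)$ to be trivial, and otherwise one arc would revisit $c_{i-1}$. Second, start the block-cut-tree route at the node $s$ (resp.\ $t$) itself when it is a cut vertex, so that $c_{i-1}\neq c_i$ in every block; the auxiliary vertex is unnecessary.
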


\section{Reduction via SPQR-Tree}\label{sec:isolate}
In this section, we reduce our problem \textsc{Two Labels of Cycles Through Two Disjoint Edges} with the aid of the SPQR-trees to a unique nontrivial task, naming it \textsc{Two Labels of Cycles Through Two Edges in Triconnected Graphs}.

\subsection{SPQR-Trees}\label{sec:SPQR-tree}

An \emph{SPQR-tree} represents the decomposition of a biconnected graph into triconnected components, which was introduced by Di Battista and Tamassia \cite{di1989incremental}. We review its definition and properties according to \cite{gutwenger2000linear,gutwenger2010application}.

Let $G=(V,E)$ be a biconnected graph. The SPQR-tree is obtained by recursively decomposing $G$ along \emph{split pairs}, i.e., pairs of vertices that form either a $2$-cut (also called a \emph{separation pair}) or a pair of adjacent vertices, and by replacing each separated part by a \emph{virtual edge}. Thus, each decomposition step splits the graph into two parts sharing exactly the corresponding split pair, and the original graph can be recovered by expanding the virtual edges recursively.

Each node $\mu$ of the SPQR-tree has an associated graph, called the \emph{skeleton} of $\mu$ and denoted by $\mathrm{skel}(\mu)$. There are four types of nodes. A \emph{Q-node} corresponds to a single original edge of $G$, an \emph{S-node} has a cycle as its skeleton, a \emph{P-node} has a skeleton consisting of two vertices joined by parallel edges, and an \emph{R-node} has a triconnected skeleton. In particular, each original edge $e\in E$ corresponds to a unique Q-node, which we denote by $\mathrm{Q}(e)$. Moreover, the skeletons of the S-, P-, and R-nodes are exactly the components in the triconnected component decomposition of $G$.

Each skeleton edge is either \emph{real} or \emph{virtual}. A real edge is an original edge of $G$, while a virtual edge is newly introduced for the decomposition. If two nodes $\mu$ and $\nu$ are adjacent in the SPQR-tree, then $\mathrm{skel}(\mu)$ and $\mathrm{skel}(\nu)$ contain corresponding virtual edges with the same pair of end vertices. This common pair of end vertices is the split pair corresponding to the tree edge $\{\mu, \nu\}$. Thus, deleting the tree edge $\{\mu, \nu\}$ splits the tree into two components, and each side represents a subgraph attached to the other side through that split pair.

Once the tree is rooted (at a node or a connected node set), every non-root node $\nu$ has a unique parent $\mu$ that is adjacent to $\nu$ and closer to the root(s) than $\nu$.
The virtual edge in $\mathrm{skel}(\nu)$ corresponding to this tree edge $\{\nu, \mu\}$ is called the \emph{parent virtual edge} of $\nu$, whose end vertices are called the \emph{poles} of $\nu$. The \emph{pertinent graph} $\mathrm{pert}(\nu)$ is obtained from $\mathrm{skel}(\nu)$ by expanding every virtual edge except the parent one. Equivalently, if $e_\nu$ is the parent virtual edge of $\nu$, then $\mathrm{pert}(\nu)-e_\nu$ is exactly the subgraph of $G$ represented by the subtree rooted at $\nu$, and it is attached to the rest of the graph only through the poles of $\nu$.

The SPQR-tree is unique up to isomorphism, can be computed in linear time, and the total size of all skeletons is $\mathrm{O}(|V|+|E|)$. In the reduction below, we focus on the unique tree path between the Q-nodes corresponding to the distinguished edges, together with the off-path subtrees attached to it.

\subsection{Reduction}\label{sec:reduction}

Let $(G=(V,E),\psi)$ be a biconnected $\Gamma$-labeled graph, and let $e_s, e_t\in E$ be disjoint.

\medskip
\noindent
\textbf{Step 1.}~
Let $T$ be an SPQR-tree of $G$, and let $\mathrm{Q}(e_s)=\mu_0,\mu_1,\dots,\mu_k=\mathrm{Q}(e_t)$
be the nodes on the unique path in $T$ from $\mathrm{Q}(e_s)$ to $\mathrm{Q}(e_t)$.
For each $i=0,1,\dots,k$, let $\ell_i$ and $r_i$ denote the left and right interface edges of $\mu_i$, namely, $\ell_0 \coloneqq e_s$, $r_k \coloneqq e_t$,
and for $1\le i\le k$, let $\ell_i$ be the virtual edge of $\mathrm{skel}(\mu_i)$ corresponding to the tree edge $\{\mu_{i-1}, \mu_i\}$, while, for $0\le i\le k-1$, let $r_i$ be the virtual edge of $\mathrm{skel}(\mu_i)$ corresponding to the tree edge $\{\mu_i, \mu_{i+1}\}$.

\medskip
\noindent
\textbf{Step 2.}~
Now regard the path $\{\mu_0, \mu_1, \dots,\mu_k\}$ as the root of $T$.
Thus every node $\nu$ outside this path has a unique parent, namely the neighbor of $\nu$ closer to the path.
Let $a_\nu,b_\nu$ be the poles of $\nu$, and let $G_\nu = \mathrm{pert}(\nu) - e_\nu$ be the subgraph of $G$ represented by the subtree rooted at $\nu$, after deleting the parent virtual edge.

Whenever a cycle through $e_s$ and $e_t$ uses the part represented by $\nu$, it enters that part at one pole and leaves it at the other.
Hence, from the viewpoint of the parent, the only relevant information carried by this subtree is the set
$M_\nu \coloneqq \{\,\psi(P)\mid P \text{ is an } a_\nu\text{--}b_\nu \text{ path in } G_\nu\,\}$.
By Lemma~\ref{two-labels}, in linear time we can compute a representative set $\Sigma_\nu\subseteq M_\nu$ of size $\min\{|M_\nu|,2\}$, together with realizing $a_\nu$--$b_\nu$ paths.

This is sufficient because every later use of $M_\nu$ only adds a fixed value to
an element of $M_\nu$.
Since $\alpha\neq \beta$ implies $\alpha+\gamma\neq \beta+\gamma$ for any $\alpha, \beta, \gamma$ in a group $\Gamma$, it is enough to keep one representative if $|M_\nu|=1$, and any two distinct representatives otherwise.

Compute these summaries for all outside nodes $\nu$ adjacent to the path $\{\mu_0, \mu_1, \dots,\mu_k\}$.
More precisely, when processing such a node $\nu$, apply Lemma~\ref{two-labels} to compute a representative set $\Sigma_\nu$ of the label of $a_\nu$--$b_\nu$ paths in $G_\nu$.
After that, every virtual edge corresponding to a subtree outside the path can be replaced by one or two edges of distinct labels corresponding to the summary of that subtree, and we keep the corresponding realizing paths for later reconstruction.

After that, every off-path virtual edge can be replaced by one or two edges of distinct labels corresponding to the summary of the attached subtree, and we keep the corresponding realizing paths for later reconstruction.

\medskip
\noindent
\textbf{Step 3.}~
For each $i=0,1,\dots,k$, let $H_i$ be the graph obtained from $\mathrm{skel}(\mu_i)$ by replacing every virtual edge other than $\ell_i$ and $r_i$ by the already computed summary edges of the corresponding off-path subtree. 
Then $H_i$ is a $\Gamma$-labeled graph, possibly with parallel edges, with two distinguished edges $\ell_i$ and $r_i$, whose labels are set to be $0$. Let $L_i$ be the set of labels of cycles in $H_i$ through both $\ell_i$ and $r_i$, truncated to at most two representatives together with realizing local cycles. 

Any cycle through $e_s$ and $e_t$ induces, for each $i$, a cycle in $H_i$ through both $\ell_i$ and $r_i$, since each off-path subtree is attached to the rest of the graph only through its two poles and can therefore be replaced by the corresponding summary edge. Conversely, any choice of one such local cycle in each $H_i$ combines along the corresponding interface edges, and after expanding each summary edge by its stored realizing path, reconstructs a cycle through $e_s$ and $e_t$ in the original graph. 

Since all interface edges have label $0$, the label of the resulting cycle is the sum of the chosen local labels. Hence, if all local label sets are singletons, then the set of labels of cycles through $e_s$ and $e_t$ is also a singleton, whereas if some $L_j$ has size at least two, fixing arbitrary choices in all other $L_i$ yields two distinct global labels. Therefore the original instance has at least two possible labels if and only if some $L_i$ has size at least two.

For Q- and P-nodes, the only local cycle through both distinguished edges is $\ell_i\cup r_i$, so the local label set is a singleton. For an S-node, the underlying local cycle through both distinguished edges is the skeleton cycle itself, so its possible labels are obtained by summing the summaries along that cycle. Thus only the case that $\mu_i$ is an R-node requires a separate argument. In that case, $H_i$ is triconnected, possibly with parallel edges, and the local problem is exactly the following:

\problemdef{Two Labels of Cycles Through Two Edges in Triconnected Graphs}
{A triconnected $\Gamma$-labeled graph $(G=(V,E), \psi)$ and two distinct edges $e_1,e_2\in E$.}
{Determine whether there exist two cycles $C_1$ and $C_2$ in $G$ that pass through both $e_1$ and $e_2$ with $\psi(C_1) \neq \psi(C_2)$, and if so, find such cycles.}

\begin{theorem}\label{triconnected-problem}
There exists a deterministic $\mathrm{O}(|V| + |E|)$-time algorithm for \textsc{Two Labels of Cycles Through Two Edges in Triconnected Graphs}.
\end{theorem}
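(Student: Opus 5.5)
Our plan is to give a constructive, linear-time version of Theorem~\ref{thm:DN2023} that works for any $\Gamma$ whose elements are self-inverse.

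\textbf{Preprocessing.} Write $e_1=x_1y_1$ and $e_2=x_2y_2$.
\begin{itemize}
\item If $e_1$ and $e_2$ share an end vertex, we use the reduction of Lemma~\ref{lem:shared-endpoint}, which rests on Lemma~\ref{two-labels}: cycles through both edges then correspond to paths between the other two end vertices avoiding the shared vertex.
\item Parallel copies of $e_1$ or $e_2$ are irrelevant to cycles that must traverse $e_1$ and $e_2$, so we delete them.
\item By Lemma~\ref{lem:tree-gauge-normalization}, we shift so that a spanning tree of $G-\{e_1,e_2\}$ containing short connections is labeled $0$. Deciding balance of $G':=G-\{e_1,e_2\}$ via (E1) then becomes trivial.
\end{itemize}
A cycle through $e_1,e_2$ is exactly $e_1\cup e_2$ together with two vertex-disjoint paths in $G'$, linking $\{x_1,y_1\}$ to $\{x_2,y_2\}$ in one of the two pairings.

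\textbf{Balanced case.} If $G'$ is balanced, we shift to the zero labelling by Lemma~\ref{lem:balanced}. Every linkage then has label $0$, so every cycle has label $\psi(e_1)+\psi(e_2)$. Since the group is abelian and self-inverse, orientation does not matter. We answer ``unique'' and output one cycle, found from two disjoint $\{x_1,y_1\}$--$\{x_2,y_2\}$ paths by Lemma~\ref{extended-menger2}.

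Non-zero cycles of length two (pairs of parallel edges) with distinct labels can be handled directly. We keep one such edge in the linkage and swap in the other; this requires finding a linkage through a prescribed edge, which is again a Menger computation after subdividing the edge. This matches the ``length at least three'' caveat of Theorem~\ref{thm:DN2023}.

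\textbf{Unbalanced case.} Otherwise (E1) yields a non-zero cycle $C$ in $G'$. Starting from a fixed linkage $P_1,P_2$ (via Lemma~\ref{extended-menger2}), we take a $4$-terminal fan or disjoint paths from the terminals and from $P_1\cup P_2$ to $V(C)$, using triconnectivity (three disjoint paths from $V(C)$ to $V(P_1\cup P_2)$). We then perform local rerouting: each rerouting replaces a subpath of $P_1$ or $P_2$ by an arc of $C$ or a connecting path. Among the constantly many resulting linkages, we aim to show that two differ in label by exactly $\psi(C)$, or by the label of some non-zero cycle derived from $C$ and the connectors, using the self-inverse property so that arc labels on the two sides of $C$ sum to $\psi(C)$.

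\textbf{Main obstacle.} The hard part is the case analysis that converts an arbitrary non-zero cycle into two linkages of different labels. DeVos and Nurse avoid it by an extremal choice, so we must replace that choice with an iterative ``shortcut'' procedure. The procedure repeatedly uses Lemma~\ref{first-hit} to replace $C$ by a non-zero cycle closer to $P_1\cup P_2$, for instance one that meets $P_1\cup P_2$ in a single segment and is attached by few paths. Each step must either terminate with witnesses or strictly shrink $C$, with total work charged to removed edges so the whole procedure stays $\mathrm{O}(|V|+|E|)$.

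Triconnectivity (no $2$-cut separating $C$ from the linkage) is what guarantees a third attachment path, and hence a rerouting that flips the label. The $\mathbb{Z}_3$ counterexample shows the self-inverse hypothesis must enter exactly at the step where two reroutings traverse $C$ in opposite directions.
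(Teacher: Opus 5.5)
There is a genuine gap. The heart of the theorem is turning a non-zero cycle of $G'$ into two linkages with distinct labels using only $\mathrm{O}(1)$ linear-time steps. You state this as a plan (``Main obstacle'') rather than prove it, and the plan as sketched does not go through. Attaching the four terminals to distinct vertices of $C$ by disjoint paths can fail: for instance, $\{x_1,x_2,z\}\subseteq V(C)$ may be a $3$-cut separating $y_1,y_2$ from the rest of $C$. Your ``strictly shrink $C$, charging work to removed edges'' also gives no linear bound. Each step needs a fan or Menger computation on the whole graph, and you specify neither the potential nor why each step must either finish or shrink. The missing ideas are:
(a) Allow the non-crossing pairing ($x_1$--$y_1$, $x_2$--$y_2$) as an intermediate target. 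Three disjoint $V(P_1)$--$V(P_2)$ paths convert any non-crossing pair into a crossing pair of the same label, or into two crossing pairs of distinct labels. This is needed because the basic witness may use the non-crossing pairing: terminals attached to $C$ at $z_1,z_2,z_3,z_4$ in cyclic order give pairs $\{z_1z_2,z_3z_4\}$ and $\{z_2z_3,z_4z_1\}$ whose labels differ by $\psi(C)$. Your insistence on crossing linkages from the start excludes exactly this.
(b) Gadget lemmas for a chord of $C$ through both $e_1$ and $e_2$, for four disjoint attachments, and for three attachments plus a fourth path landing on the attached subgraph.
(c) A case analysis on $|V(C)\cap\{x_1,y_1,x_2,y_2\}|$. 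Every case either closes by a gadget, or replaces $C$ by a non-zero cycle cut off by a chord that falls into a strictly later case in a fixed acyclic order. In the $3$-cut situation above, $2$-fans from $y_1$ to $\{z,x_2\}$ and from $y_2$ to $\{z,x_1\}$ give four path pairs whose labels sum to $\psi(C)$, so two of them differ, though not necessarily by $\psi(C)$.
The bounded depth of that case order, not a shrinking potential, is what yields $\mathrm{O}(|V|+|E|)$.

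Your treatment of non-zero $2$-cycles is also wrong, and with it the decision rule. The correct criterion is that two labels exist if and only if $G'$ (with all copies of $e_1,e_2$ removed) has a non-zero cycle of length at least three. So an unbalanced $G'$ whose only non-zero cycles have length two is a \emph{no} instance. If a non-zero parallel class $uv$ lies in a block of $G'$ with a third vertex, a $2$-fan from that vertex to $\{u,v\}$ already gives a non-zero cycle of length at least three. Otherwise the class acts as a bridge of $G'$ whose two sides each contain one end of $e_1$ and one end of $e_2$, so no linkage can use it. The triangular prism with rungs $e_1,e_2$ and the third rung doubled with distinct labels is such an instance. In general, then, there is no linkage through the parallel edge to swap into, and you give no output rule for that situation. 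Your unbalanced case therefore cannot start from an arbitrary non-zero cycle returned by (E1); you must extract a non-zero cycle of length at least three block by block. Finally, finding a linkage through a prescribed edge is not a single Menger computation after subdivision; it is essentially the cycle-through-three-specified-vertices problem.
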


We complete the proof of Theorem~\ref{thm:main} assuming Theorem~\ref{triconnected-problem}, which is proved in Section~\ref{sec:triconnected}.

\begin{proof}[Proof of Theorem~\ref{thm:main}]
As observed in Section~\ref{sec:preliminaries}, it suffices to solve \textsc{Two Labels of Cycles Through Two Disjoint Edges} with the aid of the block-cut tree, which can be computed in $\mathrm{O}(|V|+|E|)$ time \cite{hopcroft1973algorithm}.
The discussion in Section~\ref{sec:reduction} so far reduces this problem to \textsc{Two Labels of Cycles Through Two Edges in Triconnected Graphs}, which is solved by Theorem~\ref{triconnected-problem}.
Since the SPQR-tree can be constructed in linear time \cite{gutwenger2000linear} and the total size of all skeletons is $\mathrm{O}(|V|+|E|)$, the whole procedure runs in $\mathrm{O}(|V|+|E|)$ time in total.
The stored realizing summary paths and the realizing local cycles allow us to reconstruct two witness cycles in the original graph within the same linear-time bound.
\end{proof}

\section{Proof of Theorem~\ref{triconnected-problem}}\label{sec:triconnected}

\subsection{Further Reduction}
We first consider an easy case, where $e_1$ and $e_2$ share an end vertex.

\begin{lemma}\label{lem:shared-endpoint}
If $e_1$ and $e_2$ share an end vertex, then the problem can be solved in $\mathrm{O}(|V|+|E|)$ time.
\end{lemma}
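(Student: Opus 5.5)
Let $e_1 = \{v, x\}$ and $e_2 = \{v, y\}$ share the end vertex $v$. Since $e_1 \neq e_2$, we may assume $x \neq y$. If $x = y$, then $e_1$ and $e_2$ are parallel, and the only cycle through both is the $2$-cycle $\{e_1, e_2\}$, whose label is unique; we answer accordingly.

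Any cycle through $e_1$ and $e_2$ consists of $e_1$, $e_2$, and an $x$--$y$ path in $G - v$. Conversely, every $x$--$y$ path $P$ in $G - v$ closes up with $e_1$ and $e_2$ to a cycle through both edges. Such a path avoids $v$, so the two edges are not repeated, and the vertices are distinct. Hence the set of labels of the target cycles is exactly
\[
\{\, \psi(e_1) + \psi(e_2) + \psi(P) \mid P \text{ is an $x$--$y$ path in } G - v \,\}.
\]
This set is a translate of the set $L$ of labels of $x$--$y$ paths in $G - v$. Therefore it has at least two elements if and only if $|L| \ge 2$.

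The algorithm deletes $v$, which takes linear time; $G - v$ is connected (indeed biconnected), since $G$ is triconnected. We then apply Lemma~\ref{two-labels} to $G - v$ with terminals $x$ and $y$. This yields in $\mathrm{O}(|V| + |E|)$ time a set $X$ with $|X| = \min\{|L|, 2\}$ and realizing paths $P_\lambda$ for $\lambda \in X$.
\begin{itemize}
\item If $|X| = 2$, we output the cycles $P_\lambda + e_1 + e_2$ for the two values $\lambda \in X$. Their labels differ by the cancellation property of the group.
\item Otherwise, all target cycles share the same label, and we report this together with the single cycle.
\end{itemize}

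The argument is short. The only points needing care are the degenerate parallel case and the observation that deleting $v$ exactly captures the vertex-disjointness requirement.
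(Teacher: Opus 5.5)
Your proof is correct and is essentially the paper's own argument: you dispose of the parallel case, then observe that cycles through both edges correspond bijectively to $x$--$y$ paths in $G - v$, with labels shifted by $\psi(e_1)+\psi(e_2)$, and apply Lemma~\ref{two-labels}. The opening phrase ``since $e_1 \neq e_2$, we may assume $x \neq y$'' is a slip, because distinct edges can be parallel, but you handle that case correctly in the very next sentence.
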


\begin{proof}
Let $e_1=x_1y$ and $e_2=x_2y$.
If $x_1 = x_2$, then the two edges are parallel and form a unique cycle passing through both edges, so the problem is trivial.
Suppose that $x_1\neq x_2$, and set $G_0 \coloneqq G-\{y\}$.
Then, a cycle $C$ passing through both $e_1$ and $e_2$ in $G$ corresponds to an $x_1$--$x_2$ path $P$ in $G_0$, where $\psi(C) = \psi(P) + \psi(e_1) + \psi(e_2)$. 
Thus, by applying Lemma~\ref{two-labels} with $G = G_0$, $s = x_1$, and $t = x_2$, the problem is solved in $\mathrm{O}(|V| + |E|)$ time.
\end{proof}

In what follows, we assume that $e_1$ and $e_2$ are disjoint.
Specifically, let $e_1 = x_1y_1$ and $e_2 = x_2y_2$, where $x_1, y_1, x_2, y_2$ are distinct.
Let $G'$ be the graph obtained from $G$ by removing the edges $x_1y_1$ and $x_2y_2$ (if there are parallel edges, remove all of them).
Then, a solution to the decision problem (deciding whether there exist two cycles passing through both $e_1$ and $e_2$ with distinct labels) is as follows (cf.~Theorem~\ref{thm:DN2023}):
\begin{align}\label{eq:decision}
\text{The answer is yes if and only if $G'$ has a non-zero cycle of length at least three.} \tag{$\star$}
\end{align}
We prove \eqref{eq:decision} with an $\mathrm{O}(|V| + |E|)$-time algorithm for construction in the yes case, which completes the proof of Theorem~\ref{triconnected-problem}.

First, we confirm that the answer is no if $G'$ has no non-zero cycle of length at least three.
If $G'$ has no non-zero cycle (including that of length two), then Lemma~\ref{lem:balanced} concludes that $G$ can be shifted so that the $G'$ part has label all $0$, so the answer is clearly no.
Suppose that $G'$ contains parallel edges with distinct labels (forming a non-zero cycle of length two) but not a non-zero cycle of length at least three.
Then, those parallel edges form a block of $G'$ (otherwise, there is another vertex $s$ in the block containing those parallel edges, say between $u$ and $v$, and the block contains a $s$--$\{u, v\}$ $2$-fan, which yield a non-zero cycle intersecting all those three vertices).
In other words, if we merge the parallel edges into a single edge, it becomes a bridge.
Since $G$ is triconnected and $G'$ is obtained just by removing edges between two pairs $\{x_1, y_1\}$ and $\{x_2, y_2\}$, one side of the bridge contains exactly one of $x_1$ and $y_1$ and exactly one of $x_2$ and $y_2$, and the other contains the remaining.
Thus, any pair of vertex-disjoint paths in $G'$ between $\{x_1, y_1\}$ and $\{x_2, y_2\}$ (yielding a cycle in $G$ passing through both $e_1$ and $e_2$) cannot use any of those parallel edges, so the answer does not change if we remove the parallel edges.
This concludes the proof of the no case.

We show the reverse direction (i.e., if $G'$ contains a non-zero cycle of length at least three, then the answer is yes) in a constructive way.

We call $(x_1$--$y_1,\, x_2$--$y_2)$, $(x_1$--$x_2,\, y_1$--$y_2)$, and $(x_1$--$y_2,\, y_1$--$x_2)$ the \emph{pairings} of $\{x_1,y_1,x_2,y_2\}$.
The last two types are called \emph{crossing}.
A \emph{path pair} is a pair $\cP = \{P_1, P_2\}$ of two vertex-disjoint paths $P_1$ and $P_2$ in $G'$ realizing one of the three pairings, i.e., either
\begin{itemize}
    \item $P_1$ is an $x_1$--$y_1$ path and $P_2$ is an $x_2$--$y_2$ path, 
    \item $P_1$ is an $x_1$--$x_2$ path and $P_2$ is a $y_1$--$y_2$ path, or
    \item $P_1$ is an $x_1$--$y_2$ path and $P_2$ is a $y_1$--$x_2$ path.
\end{itemize}
The label of $\cP$ is defined as $\psi(\cP) \coloneqq \psi(P_1)+\psi(P_2)$.
Then, our goal is rephrased as finding two crossing path pairs with distinct labels.
The next lemma relaxes the problem so that non-crossing path pairs are also eligible.

\begin{lemma}\label{lem:2-path-transform}
Suppose that $G'$ contains a non-crossing path pair $\cP = \{P_1, P_2\}$, i.e., of type $(x_1$--$y_1,\,x_2$--$y_2)$.
Then, one can find in $\mathrm{O}(|V|+|E|)$ time one of the following two things:
\begin{itemize}
\setlength{\leftskip}{3mm}
    \item[(i)] Two crossing path pairs $\cQ_1$ and $\cQ_2$ with $\psi(\cQ_1) \neq \psi(\cQ_2)$.
    \item[(ii)] A crossing path pair $\cQ$ with $\psi(\cQ) = \psi(\cP)$.
\end{itemize}
\end{lemma}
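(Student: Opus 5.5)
Given the non-crossing pair $\cP=\{P_1,P_2\}$, with $P_1$ an $x_1$--$y_1$ path and $P_2$ an $x_2$--$y_2$ path in $G'$, the plan is to connect the two paths by two disjoint "rungs" and reroute along them. Set $A \coloneqq V(P_1)$ and $B \coloneqq V(P_2)$. Since $G$ is triconnected, $G'$ is obtained by deleting only the edges between two pairs, and $|A|,|B|\ge 2$, we expect to get two vertex-disjoint $A$--$B$ paths $R_1,R_2$ in $G'$ with no inner vertices in $A\cup B$. To obtain them, we apply Lemma~\ref{extended-menger2} with $k=2$, either in $G'$ itself or in $G$ and then argue the paths avoid $e_1,e_2$. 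This argument is needed because deleting $e_1,e_2$ costs connectivity; inner vertices avoid $A\cup B$, and $e_1,e_2$ have both ends in $A$ or both in $B$, so these edges could only appear as trivial rungs, which they cannot be. Let $R_1$ join $a_1\in A$ to $b_1\in B$, and $R_2$ join $a_2\in A$ to $b_2\in B$, with $a_1\ne a_2$ and $b_1\ne b_2$.

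Next we orient the paths: $P_1$ from $x_1$ to $y_1$ with $a_1$ before $a_2$, after swapping the roles of $R_1,R_2$ if necessary. On $P_2$, either $b_1$ precedes $b_2$ (from $x_2$ to $y_2$) or the reverse; by renaming $x_2\leftrightarrow y_2$ we may assume $b_1$ comes first. We then form two crossing pairs:
\begin{itemize}
\item $\cQ_1$ consists of $P_1[x_1,a_1]\cup R_1\cup P_2[b_1,x_2]$ together with $P_1[y_1,a_2]\cup R_2\cup P_2[b_2,y_2]$, realizing the pairing $(x_1$--$x_2,\,y_1$--$y_2)$.
\item $\cQ_2$ consists of $P_1[x_1,a_1]\cup R_1\cup P_2[b_1,y_2]$ via the middle segment $P_2[b_1,b_2]$ and $R_2$ reversed. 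This is not disjoint, so instead we use $P_1[x_1,a_2]$ (including the middle of $P_1$) $\cup R_2\cup P_2[b_2,y_2]$ and $P_1[y_1,a_2]$ ...
\end{itemize}
The cleaner choice for $\cQ_2$ is $\{\,P_1[x_1,a_2]\cup R_2\cup P_2[b_2,y_2],\; P_1[a_1..]\,\}$. To be concrete, we use the standard "ladder" with middle segments $M_A=P_1[a_1,a_2]$ and $M_B=P_2[b_1,b_2]$. The two crossing pairs are:
\begin{itemize}
\item $\cQ_1$: $x_1\to a_1\to R_1\to b_1\to x_2$ and $y_1\to a_2\to R_2\to b_2\to y_2$, which uses neither middle segment.
\item $\cQ_2$: $x_1\to a_1\to R_1\to b_1\to M_B\to b_2\to y_2$ and $y_1\to a_2\to M_A\,$..., which overlaps at $a_1$.
\end{itemize}
Because $\cQ_2$ overlaps, only $\cQ_1$ is guaranteed to be disjoint, and $\psi(\cQ_1)-\psi(\cP)=\psi(C)$, where $C \coloneqq M_A\cup R_1\cup M_B\cup R_2$ is a cycle (using that $\Gamma$ has exponent $2$, so traversed segments cancel).

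The dichotomy follows from this identity. If $\psi(C)=0$, then $\psi(\cQ_1)=\psi(\cP)$ and we output (ii). Otherwise $\psi(\cQ_1)\ne\psi(\cP)$, and we need a second crossing pair. For this we reuse the other orientation: the rungs can also be attached with $b_2$ first. Then $\cQ_1'$ gives the pairing $(x_1$--$y_2,\,y_1$--$x_2)$ with label $\psi(\cP)+\psi(C')$, where $C'$ is the cycle formed with reversed correspondence.

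The main obstacle is precisely this case $\psi(C)\ne 0$. A single ladder gives only one crossing pair, with label $\psi(\cP)+\psi(C)$, so two distinct crossing labels or an equality must come from additional structure. The first thing I would try is a third rung, obtained from the $3$-connectivity of $G$ via a $3$-fan (Lemma~\ref{extended-menger2} with $k=3$, losing at most the deleted edges). A third rung yields three ladder cycles whose labels satisfy a linear relation in $\Gamma$. Among the resulting crossing pairs, each with label $\psi(\cP)+\psi(\text{cycle})$, either two labels differ, giving (i), or all coincide with one another, in which case the relation forces one cycle label to be $0$, giving (ii). Each step uses Menger-path computations and label sums, so the whole procedure runs in $\mathrm{O}(|V|+|E|)$ time.
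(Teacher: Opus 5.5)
Once you discard the two-rung detour, your three-rung argument is correct and is essentially the paper's proof. (In that detour the abandoned $\cQ_2$ and the ``other orientation'' remark do not work: the order of the rung ends is forced, and two rungs yield only one crossing pair.) What makes the three-rung version go through: Lemma~\ref{extended-menger2} applies with $k=3$ because $|V(P_1)|,|V(P_2)|\ge 3$; no rung can use $e_1$ or $e_2$; each pair of rungs gives a crossing pair with label $\psi(\cP)+\psi(C_{ij})$, whether the rung ends appear in the same or the reversed order; and $\psi(C_{12})+\psi(C_{13})+\psi(C_{23})=0$ because every edge is covered an even number of times. If all three labels coincide, this relation forces every $\psi(C_{ij})=0$, which handles uniformly the two rung configurations that the paper treats separately.
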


\begin{proof}
Apply Lemma~\ref{extended-menger2} to $G$ with $k=3$, $A=V(P_1)$, and $B=V(P_2)$. Since $P_1$ and $P_2$ are vertex-disjoint and each has at least three vertices, we obtain three vertex-disjoint paths $R_1,R_2,R_3$ such that each $R_i$ meets $P_1$ and $P_2$ only at its ends.
Let the ends of $R_1,R_2,R_3$ on $P_1$ be $u_1,u_2,u_3$ in this order from $x_1$ to $y_1$, and let their ends on $P_2$ be $v_1,v_2,v_3$ in this order from $x_2$ to $y_2$.
Then, by the symmetry of indices of $R_i$ and directions of $P_1$ and $P_2$ (i.e., the names of $x_1, y_1$ and $x_2, y_2$), we may assume one of the following two cases (see Figure~\ref{fig:2-path-transform}):
either $R_i$ joins $u_i$ to $v_i$ for all $i$, or $R_1$ joins $u_1$ to $v_1$, $R_2$ joins $u_2$ to $v_3$, and $R_3$ joins $u_3$ to $v_2$.

\begin{figure}[t]
    \centering
    \def\svgwidth{0.9\columnwidth}
    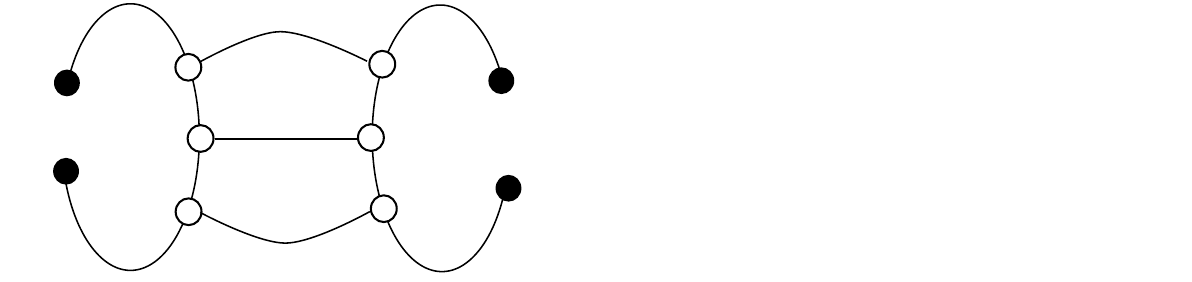 
    \caption{The two possible configurations of the three linking paths $R_1,R_2,R_3$ in Lemma~\ref{lem:2-path-transform}. Up to symmetry, these are the only two relative orders of the end vertices on $P_1$ and $P_2$.}
    \label{fig:2-path-transform}
\end{figure}

First, suppose that $R_i$ joins $u_i$ to $v_i$ for all $i$.
Let $X\coloneqq P_1[x_1,u_1]\cup R_1\cup P_2[v_1,x_2]$, let $Y \coloneqq P_1[y_1,u_2]\cup R_2\cup P_2[v_2,y_2]$, and let $Y' \coloneqq P_1[y_1,u_3]\cup R_3\cup P_2[v_3,y_2]$.
Then, $\{X,Y\}$ and $\{X,Y'\}$ are crossing path pairs.
If $\psi(X)+\psi(Y)\neq\psi(X)+\psi(Y')$, then the first outcome holds. 
Otherwise, $\psi(Y) = \psi(Y')$.
Let $C\coloneqq P_1[u_2,u_3]\cup R_2\cup P_2[v_2,v_3]\cup R_3$. Then $\psi(C)=\psi(Y)+\psi(Y')=0$.
Let $X_0\coloneqq P_1[x_1,u_2]\cup R_2\cup P_2[v_2,x_2]$ and $Y_0 \coloneqq P_1[y_1,u_3]\cup R_3\cup P_2[v_3,y_2]$.
Then, $\{X_0,Y_0\}$ is a crossing path pair again.
By considering the parity of occurrences of each edge in $P_1 \cup P_2 \cup X_0 \cup Y_0$ and $C$, we see $\psi(P_1)+\psi(P_2)+\psi(X_0)+\psi(Y_0)=\psi(C)=0$.
Thus, $\psi(X_0)+\psi(Y_0)=\psi(P_1)+\psi(P_2)$, and the second outcome holds.

Now suppose that $R_1$ joins $u_1$ to $v_1$, $R_2$ joins $u_2$ to $v_3$, and $R_3$ joins $u_3$ to $v_2$. Let $X \coloneqq P_1[x_1,u_1]\cup R_1\cup P_2[v_1,x_2]$, let $Y \coloneqq P_1[y_1,u_2]\cup R_2\cup P_2[v_3,y_2]$, and let $Y'\coloneqq P_1[y_1,u_3]\cup R_3\cup P_2[v_2,y_2]$.
Again, $\{X,Y\}$ and $\{X,Y'\}$ are crossing path pairs.
If $\psi(X)+\psi(Y)\neq\psi(X)+\psi(Y')$, then the first outcome holds.
Otherwise, $\psi(Y)=\psi(Y')$.
Let $C\coloneqq P_1[u_2,u_3]\cup R_3\cup P_2[v_2,v_3]\cup R_2$.
Then, again $\psi(C)=\psi(Y)+\psi(Y')=0$.
Let $X_0 \coloneqq P_1[x_1,u_2]\cup R_2\cup P_2[v_3,y_2]$ and let $Y_0 \coloneqq P_1[y_1,u_3]\cup R_3\cup P_2[v_2,x_2]$.
Then $\{X_0,Y_0\}$ is a crossing path pair.
By considering the parity again, we see $\psi(P_1)+\psi(P_2)+\psi(X_0)+\psi(Y_0)=\psi(C)=0$.
Thus, $\psi(X_0)+\psi(Y_0)=\psi(P_1)+\psi(P_2)$, and the second outcome holds.

The only nontrivial step is the single application of Lemma~\ref{extended-menger2}, and all remaining operations are linear.
Therefore, the whole procedure runs in $\mathrm{O}(|V|+|E|)$ time.
\end{proof}

Lemma~\ref{lem:2-path-transform} reduces the problem of finding two crossing path pairs with distinct labels to the problem of finding two (possibly non-crossing) path pairs with distinct labels.
Indeed, any non-crossing path pair $\cP$ can be transformed into two crossing path pairs of distinct labels or a crossing path pair $\cQ$ with $\psi(\cQ) = \psi(\cP)$.
Applying this replacement to every output non-crossing path pair, we either finish immediately or obtain two crossing path pairs with the same two labels as before.
The remaining task is to prove the following lemma.

\begin{lemma}\label{lem:kernel}
If $G'$ contains a non-zero cycle $C$ of length at least three, then $G'$ contains two path pairs $\cP_1$ and $\cP_2$ with $\psi(\cP_1) \neq \psi(\cP_2)$, which can be found in $\mathrm{O}(|V| + |E|)$ time.
\end{lemma}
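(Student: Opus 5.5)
Given a non-zero cycle $C$ in $G'$ with $|V(C)|\ge 3$, we connect $\{x_1,y_1\}$ and $\{x_2,y_2\}$ to $C$ and use the segments of $C$ to reroute between two path pairs that differ by exactly one arc of $C$. The extra label is then $\psi(C)\neq 0$.

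\textbf{Step 1: four linking paths.} Set $A\coloneqq\{x_1,y_1,x_2,y_2\}$ and $B\coloneqq V(C)$. Here $|B|\ge 3$, so we can only hope for three disjoint $A$--$B$ paths. We pass to $G$, which is triconnected, and apply Lemma~\ref{extended-menger2} with $k=3$ to obtain three vertex-disjoint $A$--$B$ paths whose inner vertices avoid $A\cup B$. These paths contain none of the edges $e_1,e_2$, except possibly trivially when the ends lie in $B$; we delete those edges and let paths starting at a vertex of $A\cap B$ be trivial. So the paths live in $G'$.

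To handle the fourth terminal $a\in A$, we find one more $a$--($B\cup$ the three paths) path in $G'$ that avoids the other terminals, via Lemma~\ref{first-hit}. Such a path exists because $G-\{e_1,e_2\}$ minus two vertices is still connected enough. If it hits one of the three paths, say at $w$, we obtain a configuration in which two terminals reach $C$ through a common vertex $w$.

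\textbf{Step 2: the clean case.} Suppose we have four disjoint paths $R_a$ ($a\in A$) ending at distinct $c_a\in V(C)$. Their cyclic order on $C$ pairs the terminals into two adjacent pairs in two ways. For a pairing whose pairs are consecutive on $C$, join each pair by the arc of $C$ between its ends that contains no other $c$'s. This gives a path pair; such pairings are legal, since every pairing is allowed now. Choosing the complementary consecutive pairing uses the other two arcs. The two labels differ by the sum of all four arcs, which is $\psi(C)\neq 0$, using self-inverse labels as in the parity arguments of Lemma~\ref{lem:2-path-transform}.

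\textbf{Step 3: the merged case.} Suppose instead that two terminals $a,b$ meet at $w$ before reaching $C$ via the same path ending at $c_w$, while the others reach $c_1,c_2$. In this case I would rather route $a$ and $b$ together, forming an $a$--$b$ path through $w$. The other two terminals are then joined either along the arc $c_1\to c_2$ avoiding $c_w$, or along the complementary arc through $c_w$, after discarding the tail from $w$ to $c_w$. The two labels differ by $\psi(C)$ plus the label of $R[w,c_w]$ traversed twice, which is $0$. So this case also gives distinct labels.

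\textbf{Main obstacle.} The step I expect to be hardest is establishing the fourth linking path in Step 1 rigorously: it must avoid the other terminals and yield only the two clean configurations. First I would try the 3-fan from each terminal to $B$ given by triconnectivity and compare two fans. Failing that, I would take the three disjoint paths and argue that $G$ minus two vertices stays connected to route the fourth terminal. Every operation uses Lemma~\ref{extended-menger2} or a linear search, so the construction runs in $\mathrm{O}(|V|+|E|)$ time.
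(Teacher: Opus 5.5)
Your Steps~2 and~3 (the latter with the meeting vertex $w$ off $C$) are exactly the paper's Lemmas~\ref{lem:4disjoint} and~\ref{lem:3distinct-plus}. The gap is the point you flag yourself: if the non-zero cycle $C$ is kept fixed, a fourth linking path producing one of these configurations need not exist. Triconnectivity only rules out $2$-cuts. The three attachment vertices $c_1,c_2,c_3$ can form a $3$-cut separating the fourth terminal from the rest of $C$ and of the linking paths, and the fourth path must also avoid $e_1,e_2$. It may then only be able to hit $C$ at some $c_i$, and Step~3 breaks there: when $w=c_w\in V(C)$, the complementary arc through $c_w$ meets the $a$--$b$ path at $w$, so you get only one path pair.

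Here is a concrete instance. Let $G$ be $K_4$ on $\{x_1,x_2,z,c\}$ together with $y_1,y_2$, each adjacent to $x_1,x_2,z$; this graph is triconnected. Let $\Gamma=\mathbb{Z}_2$ with $\psi(x_1c)=1$ and all other labels $0$, and let $C=x_1cx_2zx_1$. In $G'$ the neighbours of $y_1$ are $x_2,z$ and those of $y_2$ are $x_1,z$. Hence every linking of the four terminals to $C$ meets $C$ only in $\{x_1,x_2,z\}$, and no two linking paths meet off $C$. Neither of your configurations ever arises, yet $\{x_1cx_2,\,y_1zy_2\}$ and $\{y_1x_2,\,y_2x_1\}$ are path pairs with labels $1\neq 0$. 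Your fallback ideas (fans from each terminal, connectivity of $G$ minus two vertices) do not help, since every fan from $y_1$ or $y_2$ lands in $\{x_1,x_2,z\}$.

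The missing idea is that the non-zero cycle must be replaced during the argument. The paper does this through its case analysis on $|V(C)\cap\{x_1,y_1,x_2,y_2\}|$. A chord through a terminal splits $C$ into two cycles, at least one of them non-zero, with a more favourable intersection pattern (Cases~2, 3, 5 and~6). In the terminal $3$-cut situation of Case~4, it takes $2$-fans $Q_1,Q_2$ from $y_1$ to $\{z,x_2\}$ and $Q_3,Q_4$ from $y_2$ to $\{z,x_1\}$. It then uses four path pairs whose labels sum to $\psi(C)\neq 0$; they cannot all be equal, since $4\alpha=0$. Equivalently, one of the cycles $Q_1\cup P_2\cup Q_2$, $Q_3\cup P_3\cup Q_4$, $P_1\cup Q_2\cup Q_1\cup Q_3\cup Q_4$ is non-zero, and for that cycle your Step~2 or Step~3 does apply. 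Case~5 needs further fan constructions for the same reason. Without some mechanism for changing $C$, your proof does not go through.
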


\subsection{Constructing Two Path Pairs (Proof of Lemma~\ref{lem:kernel})}
Suppose that $G'$ contains a non-zero cycle of length at least three, and fix such a cycle $C$.
Note that it is easy to find $C$ in linear time by computing the block-cut tree of $G'$ and applying Lemma~\ref{lem:tree-gauge-normalization} to each block of size at least three.

We first show several lemmas, which immediately resolve easy cases in linear time.

\begin{lemma}\label{lem:chord}
Suppose that $G$ contains a $V(C)$--$V(C)$ path $Q$ passing through both $e_1$ and $e_2$ whose inner vertices are disjoint from $V(C)$.
Then there exist two distinct values $\lambda_1,\lambda_2\in\Gamma$, each realized by a path pair in $G'$.
\end{lemma}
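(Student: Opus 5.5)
The plan is a direct construction: split $Q$ at $e_1$ and $e_2$ into three segments, then close the outer two segments through either of the two arcs of $C$. The two resulting path pairs will differ in label by exactly $\psi(C) \neq 0$.

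\textbf{Setup.} Let $a$ and $b$ be the end vertices of $Q$ on $V(C)$; they are distinct because $Q$ is a path with at least two edges. Walking along $Q$ from $a$ to $b$, it traverses $e_1$ and $e_2$ once each. Up to renaming $x_i \leftrightarrow y_i$ and swapping the indices $1,2$ (the set of pairings is invariant under these renamings), $Q$ decomposes as
\[
Q = Q[a,x_1] \cup e_1 \cup Q[y_1,x_2] \cup e_2 \cup Q[y_2,b].
\]
Here some outer segments may be trivial, e.g.\ $a = x_1$ or $b = y_2$.

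\textbf{The segments lie in $G'$.} We must check that no segment uses a copy of $e_1$ or $e_2$. Each segment is a subpath of the path $Q$ that avoids one end vertex of $e_1$ and one end vertex of $e_2$: for instance, $Q[a,x_1]$ contains neither $y_1$ nor $x_2, y_2$. Hence no segment contains an edge parallel to $e_1$ or $e_2$, and all segments lie in $G'$. The cycle $C$ lies in $G'$ by assumption.

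\textbf{The two path pairs.} Let $C^1$ and $C^2$ be the two $a$--$b$ paths into which $a$ and $b$ split $C$. For $j = 1,2$ define
\[
\cP_j \coloneqq \bigl\{\, Q[x_1,a] \cup C^j \cup Q[b,y_2],\ Q[y_1,x_2] \,\bigr\}.
\]
The inner vertices of $Q$ avoid $V(C)$, and the three segments of $Q$ are pairwise vertex-disjoint. Therefore the first member of $\cP_j$ is an $x_1$--$y_2$ path, and it is vertex-disjoint from the $y_1$--$x_2$ path $Q[y_1,x_2]$. So each $\cP_j$ is a path pair in $G'$, in fact a crossing one.

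\textbf{Labels differ.} All terms other than the arc cancel, so
\[
\psi(\cP_1) + \psi(\cP_2) = \psi(C^1) + \psi(C^2) = \psi(C) \neq 0.
\]
Hence $\lambda_1 \coloneqq \psi(\cP_1)$ and $\lambda_2 \coloneqq \psi(\cP_2)$ are distinct.

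\textbf{Running time.} All steps are traversals of $Q$ and $C$, so the construction runs in linear time.

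The only delicate point is the bookkeeping in the setup: choosing the orientation and order of $e_1, e_2$ along $Q$ correctly, and confirming that degenerate trivial segments do not break vertex-disjointness. Neither requires any earlier lemma beyond the definitions.
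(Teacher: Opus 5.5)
Your proposal is correct and takes essentially the paper's route: you split $Q$ at $e_1$ and $e_2$, keep the middle segment $Q[y_1,x_2]$, and close the two outer segments through either arc of $C$, so the two resulting crossing path pairs have labels summing to $\psi(C)\neq 0$. Your explicit check that each segment of $Q$ avoids all edges parallel to $e_1$ and $e_2$, and therefore lies in $G'$, is a detail the paper leaves implicit.
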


\begin{proof}
Suppose that $Q$ starts at $z_1 \in V(C)$, intersects $x_1, y_1, x_2, y_2$ in this order, and ends at $z_2 \in V(C)$.
Then, $Q$ plays a role of chord of the non-zero cycle $C$.
Let $P_1, P_2$ be the two $z_2$--$z_1$ paths on $C$.
We then construct two path pairs $\cP_1 = \{Q[y_1, x_2], Q[y_2, z_2] \cup P_1 \cup Q[z_1, x_1]\}$ and $\cP_2 = \{Q[y_1, x_2], Q[y_2, z_2] \cup P_2 \cup Q[z_1, x_1]\}$, each realizing the same terminal pairing $\{x_1, y_2\}, \{y_1, x_2\}$.
We see
\[\psi(\cP_1) + \psi(\cP_2) = \psi(C) + 2\psi(Q - \{e_1, e_2\}) = \psi(C) \neq 0.\]
Thus, we obtain $\lambda_1 \coloneqq \psi(\cP_1) \neq \psi(\cP_2) \eqqcolon \lambda_2$.
\end{proof}

\begin{lemma}\label{lem:4disjoint}
Suppose that $G'$ contains pairwise vertex-disjoint paths $Q_{x_1},Q_{y_1},Q_{x_2},Q_{y_2}$, each possibly trivial, such that $Q_z$ joins $z$ to a vertex of $C$ for each $z\in\{x_1,y_1,x_2,y_2\}$, and in particular the four end vertices on $C$ are distinct (see Figure~\ref{fig:3distinct-plus}). Then there exist two distinct values $\lambda_1,\lambda_2\in\Gamma$, each realized by a path pair in $G'$.
\end{lemma}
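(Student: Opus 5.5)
Let $c_z\in V(C)$ denote the end vertex of $Q_z$ on $C$ for $z\in\{x_1,y_1,x_2,y_2\}$. The plan is to route the paths $Q_z$ around the cycle $C$ so as to produce two path pairs with the same terminals that differ exactly by the cycle $C$, exactly as in the proof of Lemma~\ref{lem:chord}. The four distinct vertices $c_{x_1},c_{y_1},c_{x_2},c_{y_2}$ appear on $C$ in some cyclic order. Among the three pairings of $\{x_1,y_1,x_2,y_2\}$, exactly one pairs the terminals into two pairs whose $C$-ends are \emph{non-interleaved}, i.e., consecutive in the cyclic order. Call this pairing $(a$--$b,\ c$--$d)$, so that $c_a,c_b,c_c,c_d$ appear on $C$ in this cyclic order (possibly after renaming). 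Every pairing is eligible for a path pair in Lemma~\ref{lem:kernel}, so we are free to use this one.

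The cycle $C$ decomposes into four internally disjoint arcs: $A_1=C[c_a,c_b]$, $A_2=C[c_b,c_c]$, $A_3=C[c_c,c_d]$ and $A_4=C[c_d,c_a]$, where the brackets denote arcs following the cyclic order. We define two path pairs.
\begin{itemize}
\item The first is $\cP_1=\{Q_a\cup A_1\cup Q_b,\ Q_c\cup A_3\cup Q_d\}$.
\item The second is $\cP_2=\{Q_a\cup A_4\cup A_3\cup A_2\cup Q_b,\ \ldots\}$. The second path of $\cP_2$ cannot use $A_3$ again, so this naive choice fails. Instead we fix the $c$--$d$ path by choosing $\cP_2=\{Q_a\cup A_1\cup Q_b,\ Q_c\cup A_2\cup A_1\cup A_4\cup Q_d\}$. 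This also fails, since it reuses $A_1$.
\end{itemize}
Both attempts fail, so we switch to a crossing pairing instead. The pairing $(a$--$c,\ b$--$d)$ has interleaved ends on $C$, so no routing along $C$ alone works for it. The correct comparison is therefore between two \emph{different} pairings. Let $\cP_1$ be as above. Let $\cP_2=\{Q_a\cup A_4\cup Q_d,\ Q_b\cup A_2\cup Q_c\}$, which realizes the pairing $(a$--$d,\ b$--$c)$. The components of each pair are vertex-disjoint because the arcs used are internally disjoint, share no ends, and the $Q_z$ are pairwise disjoint and meet $C$ only at $c_z$. For the last point, we first shorten each $Q_z$ by Lemma~\ref{first-hit} so that it meets $V(C)$ only at its end. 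This shortening needs care, since the new end must remain distinct from the others. We handle it by taking the first hit of the union of all $Q_z$ ends; I expect this shortening, and possibly the need to reselect ends, to be the main technical obstacle.

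Since every element of $\Gamma$ is self-inverse, we get
\[
\psi(\cP_1)+\psi(\cP_2)=2\sum_z\psi(Q_z)+\psi(A_1)+\psi(A_2)+\psi(A_3)+\psi(A_4)=\psi(C)\neq 0 .
\]
Hence $\psi(\cP_1)\neq\psi(\cP_2)$, and setting $\lambda_i:=\psi(\cP_i)$ proves the lemma. Everything is linear time: we read off the cyclic order on $C$ and concatenate the pieces.
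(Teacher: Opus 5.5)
Your final construction, which compares $\cP_1=\{Q_a\cup A_1\cup Q_b,\ Q_c\cup A_3\cup Q_d\}$ with $\cP_2=\{Q_a\cup A_4\cup Q_d,\ Q_b\cup A_2\cup Q_c\}$ and uses that their labels sum to $\psi(C)\neq 0$, is exactly the paper's proof, so the argument is correct once the false starts are deleted. Two corrections are needed: first, with the ends in cyclic order $a,b,c,d$, two of the three pairings, $(a$--$b,\ c$--$d)$ and $(a$--$d,\ b$--$c)$, are non-interleaved rather than ``exactly one'' (and this has nothing to do with the paper's notion of \emph{crossing}, which is defined relative to $e_1,e_2$); second, the shortening you flag as the main obstacle is immediate and should be done before reading off the cyclic order, because the first vertex of $C$ on $Q_z$ lies on $Q_z$ and the $Q_z$ are pairwise vertex-disjoint, so the shortened paths stay disjoint and their new ends on $C$ are automatically distinct, with no reselection needed.
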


\begin{proof}
For each $z\in\{x_1,y_1,x_2,y_2\}$, let $p_z$ be the first vertex of $C$ on $Q_z$. By Lemma~\ref{first-hit}, we may replace $Q_z$ by its initial segment from $z$ to $p_z$. Then each $Q_z$ meets $C$ only at $p_z$, and the four paths remain pairwise vertex-disjoint.

Let $z_1,z_2,z_3,z_4$ be the vertices $x_1,y_1,x_2,y_2$ in the cyclic order of $p_{z_1},p_{z_2},p_{z_3},p_{z_4}$ on $C$. Let $A_i$ be the $p_{z_i}$--$p_{z_{i+1}}$ path on $C$ whose inner vertices avoid the other two attachment points, where the indices are read modulo $4$. Set $P_1\coloneqq Q_{z_1}\cup A_1\cup Q_{z_2}$, $P_2\coloneqq Q_{z_3}\cup A_3\cup Q_{z_4}$, $S_1\coloneqq Q_{z_2}\cup A_2\cup Q_{z_3}$, and $S_2 \coloneqq Q_{z_4}\cup A_4\cup Q_{z_1}$.
Then $\cP = \{P_1,P_2\}$ and $\cS = \{S_1,S_2\}$ are two path pairs in $G'$, which realize terminal pairings $\{z_1,z_2\},\{z_3,z_4\}$ and $\{z_2,z_3\},\{z_4,z_1\}$, respectively. 
Since $A_1\cup A_2\cup A_3\cup A_4=C$ and each $Q_{z_i}$ appears exactly twice, we have $\psi(\cP)+\psi(\cS)=\psi(C) + 2\sum_{i=1}^4 \psi(Q_{z_i}) = \psi(C) \neq 0$.
Thus, we obtain $\lambda_1 \coloneqq \psi(\cP) \neq \psi(\cS) \eqqcolon \lambda_2$.
\end{proof}

\begin{lemma}\label{lem:3distinct-plus}
Suppose that $y_2 \not\in V(C)$ and $G'$ contains pairwise vertex-disjoint paths $Q_{x_1},Q_{y_1},Q_{x_2}$, each possibly trivial, from $x_1,y_1,x_2$ to distinct vertices $p_{x_1},p_{y_1},p_{x_2}$ of $C$. Let $H\coloneqq C\cup Q_{x_1}\cup Q_{y_1}\cup Q_{x_2}$. Suppose also that $G'$ contains a path $Q_{y_2}$, possibly trivial, from $y_2$ to a vertex of $V(H)\setminus\{p_{x_1},p_{y_1},p_{x_2}\}$ whose inner vertices are disjoint from $H$ (see Figure~\ref{fig:3distinct-plus}). Then there exist two distinct values $\lambda_1,\lambda_2\in\Gamma$, each realized by a path pair in $G'$.
\end{lemma}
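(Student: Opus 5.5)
The plan is to split according to where the end vertex $q$ of $Q_{y_2}$ on $H$ lies: either on $C$, or on one of the three attaching paths away from $C$. In each case I will exhibit two path pairs whose labels add up to $\psi(C)$, as in the proofs of Lemmas~\ref{lem:chord} and~\ref{lem:4disjoint}.

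First I normalize the attaching paths. For each $z\in\{x_1,y_1,x_2\}$, the path $Q_z$ cannot pass through $p_w$ for $w\neq z$, because $p_w\in V(Q_w)$ and the three paths are vertex-disjoint. So I may replace $Q_z$ by its initial segment up to its first vertex on $C$ (Lemma~\ref{first-hit}), and the three attachment points stay distinct.

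This truncation can change $H$. If the old $q$ lay in a discarded part of some $Q_z$, I walk along $Q_{y_2}$ from $y_2$ and stop at its first vertex in the new graph $H$; inner vertices of $Q_{y_2}$ still avoid the new $H$. If that first vertex happens to be one of the new attachment points, I expect to reroute along the discarded portion of $Q_z$ and recover the required property. I expect this bookkeeping to be the only fiddly part of the argument. From now on I assume that each $Q_z$ meets $C$ only at $p_z$ and that $q\in V(H)\setminus\{p_{x_1},p_{y_1},p_{x_2}\}$.

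\emph{Case 1: $q\in V(C)$.} Then $Q_{y_2}$ is disjoint from $Q_{x_1},Q_{y_1},Q_{x_2}$, since its inner vertices avoid $H$ and $y_2\notin V(C)$. Its end vertex $q$ is a fourth attachment point on $C$, distinct from the other three. Hence the hypotheses of Lemma~\ref{lem:4disjoint} hold, and that lemma gives the two distinct values.

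\emph{Case 2: $q\in V(Q_z)\setminus\{p_z\}$ for some $z\in\{x_1,y_1,x_2\}$.} This includes the possibility $q=z$. Let $w,w'$ be the other two terminals among $\{x_1,y_1,x_2\}$. Set $R\coloneqq Q_z[z,q]\cup Q_{y_2}$, which is a $z$--$y_2$ path. It avoids $C$ entirely: $Q_z[z,q]$ misses $p_z$ and hence all of $C$, $y_2\notin V(C)$, and the inner vertices of $Q_{y_2}$ avoid $H$. Let $A$ and $A'$ be the two $p_w$--$p_{w'}$ paths on $C$. One of them passes through $p_z$, which is harmless because $R$ does not use $p_z$. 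Then $\cP_1\coloneqq\{R,\,Q_w\cup A\cup Q_{w'}\}$ and $\cP_2\coloneqq\{R,\,Q_w\cup A'\cup Q_{w'}\}$ are vertex-disjoint path pairs realizing the pairing $\{z,y_2\},\{w,w'\}$. This is one of the three admissible pairings, and any pairing is allowed for path pairs.

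Since every element of $\Gamma$ is self-inverse, the repeated parts $R$, $Q_w$ and $Q_{w'}$ cancel, so
\[\psi(\cP_1)+\psi(\cP_2)=\psi(A)+\psi(A')=\psi(C)\neq 0.\]
Hence $\psi(\cP_1)\neq\psi(\cP_2)$.

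All constructions above are path concatenations, so the two path pairs can be output in linear time.
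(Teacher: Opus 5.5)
Your two cases are the paper's proof. A fourth attachment point on $C$ is handled by Lemma~\ref{lem:4disjoint}. An attachment point on some $Q_z$ off $C$ is handled by pairing $z$ with $y_2$ and using the two $p_w$--$p_{w'}$ arcs of $C$, whose labels sum to $\psi(C)\neq 0$. Once each $Q_z$ meets $C$ only at $p_z$, this is correct, apart from one small omission. The fact $y_2\notin V(C)$ does not rule out $y_2\in V(Q_z)$, so you must first cut $Q_{y_2}$ at its first vertex of $H$, as the paper does; this makes $Q_{y_2}$ trivial in that situation and puts you in Case~2.

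The step you postponed, however, is a genuine gap, and the rerouting you hope for cannot always be done. Write $p'_z$ for the first vertex of $Q_z$ on $C$. If $q$ lies on $Q_z$ after $p'_z$ and off $C$, then $Q_{y_2}$ does not meet the truncated $H$ at all, so \emph{stop at its first vertex in the new $H$} produces nothing. What works there is to extend $Q_{y_2}$ from $q$ forward along $Q_z$ to the next vertex of $C$. That vertex is none of the new attachment points, so Lemma~\ref{lem:4disjoint} applies. The real obstruction is $q=p'_z\neq p_z$: after truncation $q$ becomes the attachment point of $Q_z$ itself.

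Here is an example where nothing can be recovered locally. Let $C=q\,b\,c\,r\,q$ be non-zero, $Q_{x_1}=x_1\,q\,u\,r$ with $u\notin V(C)$, $Q_{y_1}=y_1\,b$, $Q_{x_2}=x_2\,c$, and $Q_{y_2}=y_2\,q$. All hypotheses of the lemma hold: $p_{x_1}=r$ and $q\notin\{r,b,c\}$, and this configuration can sit inside a triconnected $G$. Yet in $H\cup Q_{y_2}$ both $x_1$ and $y_2$ have $q$ as their only neighbour. So the only path pair there is $\{x_1\,q\,y_2,\;y_1\,b\,c\,x_2\}$. Hence no rerouting along the discarded part of $Q_{x_1}$, and no argument confined to $H\cup Q_{y_2}$, can give two labels; one would have to use the triconnectivity of $G$ elsewhere.

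The paper's proof has exactly the same blind spot: its opening \emph{we may shorten $Q_{x_1},Q_{y_1},Q_{x_2}$} silently assumes this case away. It is harmless there, because in every application of the lemma (Case~5) the three attaching paths already meet $C$ only at their end vertices. The clean repair, for you as for the paper, is to add that condition to the hypothesis. Then no truncation is needed, and your Cases~1 and~2 (with the cut of $Q_{y_2}$) form a complete proof.
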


\begin{figure}[t]
    \centering
    \scalebox{1.0}{
        \def\svgwidth{0.8\columnwidth}
        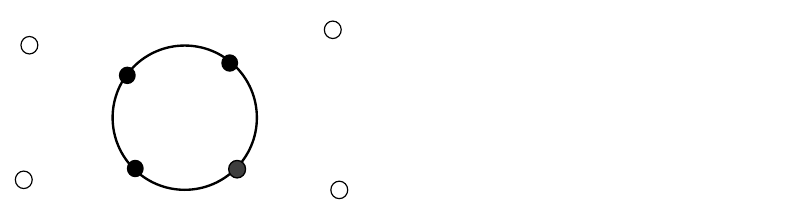
    }
    \caption{Illustration for Lemmas~\ref{lem:4disjoint} and \ref{lem:3distinct-plus}. Left: The situation of Lemma~\ref{lem:4disjoint}, and also the situation of Lemma~\ref{lem:3distinct-plus} when $Q_{y_2}$ ends in $V(C) \setminus \{p_{x_1}, p_{x_2}, p_{y_1}\}$; there are two path pairs realizing distinct labels. Right: The situation of Lemma~\ref{lem:3distinct-plus} when $Q_{y_2}$ ends in $V(H) \setminus V(C)$, i.e., on one of the three paths, say $Q_{x_2}$ here; the two $p_{x_1}$--$p_{y_1}$ paths on $C$ yield two path pairs with distinct labels.} 
    \label{fig:3distinct-plus}
\end{figure}

\begin{proof}
By Lemma~\ref{first-hit}, we may shorten $Q_{x_1},Q_{y_1},Q_{x_2}$ so that each meets $C$ only at its end vertex. Let $p_{y_2}$ be the first vertex of $H$ on $Q_{y_2}$. Again by Lemma~\ref{first-hit}, we may shorten $Q_{y_2}$ so that it meets $H$ only at $p_{y_2}$.

There are two possible positions of $p_{y_2}$ (see Figure~\ref{fig:3distinct-plus}): on the cycle $C$, and on one of the three paths $Q_{x_1},Q_{y_1},Q_{x_2}$.
If $p_{y_2}\in V(C)$, then $Q_{x_1},Q_{y_1},Q_{x_2},Q_{y_2}$ are four pairwise vertex-disjoint paths from $x_1,y_1,x_2,y_2$ to four distinct vertices of $C$, so Lemma~\ref{lem:4disjoint} applies.
Thus we may assume that $p_{y_2}$ lies on one of $Q_{x_1},Q_{y_1},Q_{x_2}$. 

Let $a\in\{x_1,y_1,x_2\}$ be such that $p_{y_2}\in V(Q_a)$, and let $\{b,c\}=\{x_1,y_1,x_2\}\setminus\{a\}$. Let $R, R'$ be the two $p_b$--$p_c$ paths on $C$.
Set $P\coloneqq Q_a[a, p_{y_2}] \cup Q_{y_2}$, $Q \coloneqq Q_b\cup R\cup Q_c$, and $Q'\coloneqq Q_b\cup R'\cup Q_c$. Then $\cP = \{P,Q\}$ and $\cP' = \{P,Q'\}$ are two path pairs in $G'$, which realize the same terminal pairing $\{a,y_2\},\{b,c\}$.
Since $R\cup R'=C$ and the common parts $Q_b$ and $Q_c$ cancel, we have $\psi(Q)+\psi(Q')=\psi(C)\neq 0$. Hence $\psi(Q)\neq\psi(Q')$, and therefore $\psi(\cP)\neq\psi(\cP')$. Thus, we obtain $\lambda_1 \coloneqq \psi(\cP) \neq \psi(\cP') \eqqcolon \lambda_2$.
\end{proof}

We now complete the proof of Lemma~\ref{lem:kernel} based on the following case analysis:
\begin{description}
\setlength{\itemsep}{0mm}
\item[Case 1.] All four vertices $x_1, y_1, x_2, y_2$ are on $C$.
\item[Case 2.] Exactly three among $x_1, y_1, x_2, y_2$ are on $C$.
\item[Case 3.] $V(C) \cap \{x_1, y_1, x_2, y_2\} = \{x_1, y_1\}$ or $\{x_2, y_2\}$.
\item[Case 4.] $|V(C) \cap \{x_1, y_1\}| = 1$ and $|V(C) \cap \{x_2, y_2\}| = 1$.
\item[Case 5.] Exactly one among $x_1, y_1, x_2, y_2$ is on $C$.
\item[Case 6.] None of $x_1, y_1, x_2, y_2$ is on $C$.
\end{description}
It is clear that all cases are indeed completed in linear time from the description.
At a high level, the proof structure is as follows:
Case 6 reduces to one of Cases 1--5, Case 2 reduces to one of Cases 3--5, Case 3 reduces to one of Cases 4--5, Case 5 reduces to Case 4, and Cases 1 and 4 are proved as the terminal cases.

\subsubsection{Case 1: When all four vertices $x_1, y_1, x_2, y_2$ are on $C$}
In this case, Lemma~\ref{lem:4disjoint} immediately completes the proof.

\subsubsection{Case 2: When exactly three among $x_1, y_1, x_2, y_2$ are on $C$}
As we shall see, this case reduces to Case 3, 4, or 5.
By symmetry, we may assume that $x_1, y_1, x_2$ are on $C$.
Let $P_1$ and $P_2$ be the two $x_1$--$y_1$ paths on $C$ such that $P_1$ is disjoint from $x_2$ and $P_2$ intersects $x_2$.
Then, $P_2$ has an inner vertex $x_2$, and $P_1$ also has an inner vertex as $G'$ has no edge between $x_1$ and $y_1$.
Since $G$ is triconnected, $G - \{x_1, y_1\}$ is connected, so it contains a path $Q$ from $V(P_1) \setminus \{x_1, y_1\}$ to $V(P_2) \setminus \{x_1, y_1\}$.
By Lemma~\ref{first-hit}, we may assume that all inner vertices of $Q$ are disjoint from $C$.

\begin{figure}[t]
    \centering
    \scalebox{1.0}{
        \def\svgwidth{0.5\columnwidth}
        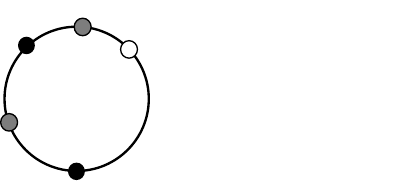
    }
    \caption{Illustration for Case 2. Left: The dashed path is $Q$, which splits the non-zero cycle $C$ into two cycles at least one of which is non-zero. Right: The dashed path is a subpath of $Q$, which enables us to apply Lemma~\ref{lem:4disjoint}.}
    \label{fig:case2}
\end{figure}

Suppose that $Q$ starts at $z_1 \in V(P_1) \setminus \{x_1, y_1\}$ and ends at $z_2 \in V(P_2) \setminus \{x_1, y_1\}$; see Figure~\ref{fig:case2}.
If $Q$ intersects $y_2$, then Lemma~\ref{lem:4disjoint} completes the proof.
Otherwise, $Q$ plays a role of chord of the non-zero cycle $C$, where any of the two $z_1$--$z_2$ paths on $C$ intersects one or two of $x_1, y_1, x_2$. 
Thus, by splitting $C$ with this chord, one can obtain a non-zero cycle intersecting one or two among $x_1, y_1, x_2, y_2$, reducing this case to Case 3, 4, or 5.

\subsubsection{Case 3: When $V(C) \cap \{x_1, y_1, x_2, y_2\} = \{x_1, y_1\}$ or $\{x_2, y_2\}$}
Similarly, we can see that this case reduces to Case 4 or 5.
By symmetry, we may assume $V(C) \cap \{x_1, y_1, x_2, y_2\} = \{x_1, y_1\}$.
Let $P_1$ and $P_2$ be the two $x_1$--$y_1$ paths on $C$.
As in Case 2, both $P_1$ and $P_2$ have an inner vertex.
Since $G$ is triconnected, $G - \{x_1, y_1\}$ is connected, so it contains a path $Q$ from $V(P_1) \setminus \{x_1, y_1\}$ to $V(P_2) \setminus \{x_1, y_1\}$.
By Lemma~\ref{first-hit}, we may assume that all inner vertices of $Q$ are disjoint from $C$.

\begin{figure}[t]
    \centering
    \vspace{3mm}
    \scalebox{1.0}{
        \def\svgwidth{0.65\columnwidth}
        \input{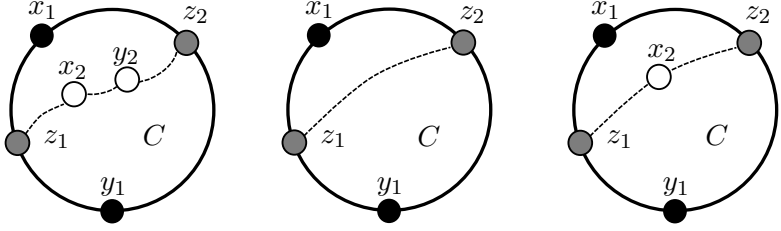}
    }
    \caption{Illustration for Case 3. The dashed path is $Q$. Left: There exist four disjoint paths and Lemma~\ref{lem:4disjoint} applies. Middle, Right: $Q$ splits $C$ into two cycles at least one of which is non-zero.}
    \label{fig:case3}
\end{figure}

Suppose that $Q$ starts at $z_1 \in V(P_1) \setminus \{x_1, y_1\}$ and ends at $z_2 \in V(P_2) \setminus \{x_1, y_1\}$; see Figure~\ref{fig:case3}.
If $Q$ intersects both $x_2$ and $y_2$, then Lemma~\ref{lem:4disjoint} completes the proof.
Otherwise, $Q$ intersects at most one of $x_2$ and $y_2$.
Then, $Q$ plays a role of chord of the non-zero cycle $C$, where any of the two $z_1$--$z_2$ paths on $C$ intersects exactly one of $x_1$ and $y_1$. 
Thus, by splitting $C$ with this chord, one can obtain a non-zero cycle intersecting exactly one of $x_1$ and $y_1$ and at most one of $x_2$ and $y_2$, reducing this case to Case 4 or 5.

\subsubsection{Case 4: When $|V(C) \cap \{x_1, y_1\}| = 1$ and $|V(C) \cap \{x_2, y_2\}| = 1$}
By symmetry, we may assume $V(C) \cap \{x_1, y_1, x_2, y_2\} = \{x_1, x_2\}$.
If there exists a $y_1$--$y_2$ path $Q$ in $G' - V(C)$, then combining with the two $x_1$--$x_2$ paths $P_1$ and $P_2$ on $C$ with distinct labels, one can obtain desired path pairs $\{P_1, Q\}$ and $\{P_2, Q\}$.
Thus, we may assume that any $y_1$--$y_2$ path in $G'$ intersects $V(C)$.
Also, if there exist two vertex-disjoint paths in $G - \{x_1, x_2\} = G' - \{x_1, x_2\}$ between $\{y_1, y_2\}$ and $V(C) \setminus \{x_1, x_2\}$ (which can be found by Lemma~\ref{extended-menger2}), Lemma~\ref{lem:4disjoint} completes the proof. 
Since $G$ is triconnected, we may assume that $G - \{x_1, x_2\}$ has a $1$-cut separating $\{y_1, y_2\}$ and $V(C) \setminus \{x_1, x_2\}$ by Menger's theorem, which is on $C$ (otherwise $G' - V(C)$ has a $y_1$--$y_2$ path through the $1$-cut).

Let $z \in V(C)$ be a vertex such that $G - \{x_1, x_2, z\}$ has no path between $\{y_1, y_2\}$ and $V(C) \setminus \{x_1, x_2, z\}$.
By Lemma~\ref{extended-menger2}, one can find in linear time
\begin{itemize}
    \item a $y_1$--$\{z, x_2\}$ $2$-fan in $G - \{x_1\}$, and
    \item a $y_2$--$\{z, x_1\}$ $2$-fan in $G - \{x_2\}$.
\end{itemize}
Let $Q_1, Q_2, Q_3, Q_4$ be the $y_1$--$z$, $y_1$--$x_2$, $y_2$--$z$, $y_2$--$x_1$ paths in the $2$-fans, respectively.
Since $\{x_1, x_2, z\}$ is a $3$-cut in $G$ separating $\{y_1, y_2\}$ and $V(C) \setminus \{x_1, x_2, z\}$, all of those paths are disjoint from $V(C) \setminus \{x_1, x_2, z\}$ (see Figure~\ref{fig:case4}).
Also, since $G' - V(C)$ has no $y_1$--$y_2$ path, 
\begin{itemize}
    \item $Q_1$ and $Q_2$ do not pass through $e_2$ (otherwise they intersect $y_2$ before reaching $\{z, x_2\}$), so $Q_1$ and $Q_2$ are in $G'$,
    \item $Q_3$ and $Q_4$ do not pass through $e_1$ (otherwise they intersect $y_1$ before reaching $\{z, x_1\}$), so $Q_3$ and $Q_4$ are in $G'$, and
    \item all four paths share only the end vertices (otherwise $G' - V(C)$ has a $y_1$--$y_2$ path through the shared inner vertex).
\end{itemize}
Let $P_1, P_2, P_3$ be the $x_1$--$x_2$, $x_2$--$z$, $z$--$x_1$ paths on $C$ that are disjoint from $z, x_1, x_2$, respectively.
Consider four path pairs $\cP_1 = \{P_1, Q_1 \cup Q_3\}$, $\cP_2 = \{Q_2, Q_4\}$, $\cP_3 = \{Q_2, Q_3 \cup P_3\}$, and $\cP_4 = \{Q_4, Q_1 \cup P_2\}$.
They are indeed path pairs by the above discussion, whose labels satisfy
\[\sum_{i=1}^4\psi(\cP_i) = \sum_{i=1}^3\psi(P_i) + 2\sum_{i=1}^4\psi(Q_i) = \sum_{i=1}^3\psi(P_i) = \psi(C) \neq 0.\]
This concludes that there exists a pair of $\cP_i$ and $\cP_j$ with $\psi(\cP_i) \neq \psi(\cP_j)$ (otherwise $\sum_{i=1}^4\psi(\cP_i) = 4\alpha = 0$ for some $\alpha \in \Gamma$), completing the proof.

\begin{figure}[t]
    \centering
    \scalebox{1.0}{
        \def\svgwidth{0.3\columnwidth}
        %% Creator: Inkscape 1.4.3 (0d15f75, 2025-12-25), www.inkscape.org
%% PDF/EPS/PS + LaTeX output extension by Johan Engelen, 2010
%% Accompanies image file 'Case_4.pdf' (pdf, eps, ps)
%%
%% To include the image in your LaTeX document, write
%%   \input{<filename>.pdf_tex}
%%  instead of
%%   \includegraphics{<filename>.pdf}
%% To scale the image, write
%%   \def\svgwidth{<desired width>}
%%   \input{<filename>.pdf_tex}
%%  instead of
%%   \includegraphics[width=<desired width>]{<filename>.pdf}
%%
%% Images with a different path to the parent latex file can
%% be accessed with the `import' package (which may need to be
%% installed) using
%%   \usepackage{import}
%% in the preamble, and then including the image with
%%   \import{<path to file>}{<filename>.pdf_tex}
%% Alternatively, one can specify
%%   \graphicspath{{<path to file>/}}
%% 
%% For more information, please see info/svg-inkscape on CTAN:
%%   http://tug.ctan.org/tex-archive/info/svg-inkscape
%%
\begingroup%
  \makeatletter%
  \providecommand\color[2][]{%
    \errmessage{(Inkscape) Color is used for the text in Inkscape, but the package 'color.sty' is not loaded}%
    \renewcommand\color[2][]{}%
  }%
  \providecommand\transparent[1]{%
    \errmessage{(Inkscape) Transparency is used (non-zero) for the text in Inkscape, but the package 'transparent.sty' is not loaded}%
    \renewcommand\transparent[1]{}%
  }%
  \providecommand\rotatebox[2]{#2}%
  \newcommand*\fsize{\dimexpr\f@size pt\relax}%
  \newcommand*\lineheight[1]{\fontsize{\fsize}{#1\fsize}\selectfont}%
  \ifx\svgwidth\undefined%
    \setlength{\unitlength}{139.99937403bp}%
    \ifx\svgscale\undefined%
      \relax%
    \else%
      \setlength{\unitlength}{\unitlength * \real{\svgscale}}%
    \fi%
  \else%
    \setlength{\unitlength}{\svgwidth}%
  \fi%
  \global\let\svgwidth\undefined%
  \global\let\svgscale\undefined%
  \makeatother%
  \begin{picture}(1,0.56732945)%
    \lineheight{1}%
    \setlength\tabcolsep{0pt}%
    \put(0,0){\includegraphics[width=\unitlength,page=1]{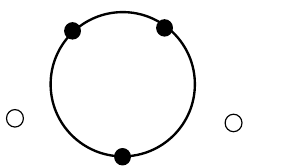}}%
    \put(0.20546161,0.51804902){\color[rgb]{0,0,0}\makebox(0,0)[lt]{\lineheight{1.25}\smash{\begin{tabular}[t]{l}$x_2$\end{tabular}}}}%
    \put(0.7713674,0.06147066){\color[rgb]{0,0,0}\makebox(0,0)[lt]{\lineheight{1.25}\smash{\begin{tabular}[t]{l}$y_2$\end{tabular}}}}%
    \put(0.38953479,0.07833765){\color[rgb]{0,0,0}\makebox(0,0)[lt]{\lineheight{1.25}\smash{\begin{tabular}[t]{l}$z$\end{tabular}}}}%
    \put(0.53804514,0.52089448){\color[rgb]{0,0,0}\makebox(0,0)[lt]{\lineheight{1.25}\smash{\begin{tabular}[t]{l}$x_1$\end{tabular}}}}%
    \put(-0.01609571,0.07915715){\color[rgb]{0,0,0}\makebox(0,0)[lt]{\lineheight{1.25}\smash{\begin{tabular}[t]{l}$y_1$\end{tabular}}}}%
    \put(0.38109712,0.24686763){\color[rgb]{0,0,0}\makebox(0,0)[lt]{\lineheight{1.25}\smash{\begin{tabular}[t]{l}$C$\end{tabular}}}}%
    \put(0,0){\includegraphics[width=\unitlength,page=2]{Case_4.pdf}}%
  \end{picture}%
\endgroup%

    }
    \caption{Illustration for Case 4. The $2$-fans do not share inner vertices with each other or $C$.}
    \label{fig:case4}
\end{figure}

\subsubsection{Case 5: When exactly one among $x_1, y_1, x_2, y_2$ is on $C$}
By symmetry, we may assume $V(C) \cap \{x_1, y_1, x_2, y_2\} = \{x_1\}$.
Since $G$ is triconnected, there exist two vertex-disjoint paths $Q_1, Q_2$ between $(V(C) \setminus \{x_1\}) \cup \{y_1\}$ and $\{x_2, y_2\}$ in $G - \{x_1\}$ such that $Q_1$ starts at $y_1$ and $Q_2$ has no inner vertex on $C$, which can be found by Lemma~\ref{extended-menger2} (apply to the graph obtained from $G - \{x_1\}$ by adding a vertex $v_C$ adjacent to all vertices in $V(C) \setminus \{x_1\}$ with $A = \{v_C, y_1\}$ and $B = \{x_2, y_2\}$, remove $v_C$ from one of the obtained paths, and use Lemma~\ref{first-hit} if necessary); see Figure~\ref{fig:case5_1}.
Note that neither $Q_1$ nor $Q_2$ can use $e_2$, so they are paths in $G'$.
If $Q_1$ is disjoint from $V(C)$, then by concatenating $e_1, Q_1, e_2, Q_2$, we obtain a $V(C)$--$V(C)$ path in $G$ passing through both $e_1$ and $e_2$ whose inner vertices are disjoint from $V(C)$.
Then, Lemma~\ref{lem:chord} completes the proof.
Also, if $Q_1$ contains at least two vertices on $C$, then by letting $z_1$ and $z_2$ be the first and last ones, respectively, and $w$ be the end vertex of $Q_2$ on $C$, we obtain four vertex-disjoint paths between $\{x_1, y_1, x_2, y_2\}$ and $\{x_1, z_1, z_2, w\}$.
Then, Lemma~\ref{lem:4disjoint} completes the proof.

\begin{figure}[t]
    \centering
    \vspace{3mm}
    \scalebox{1.0}{
        \def\svgwidth{0.9\columnwidth}
        \input{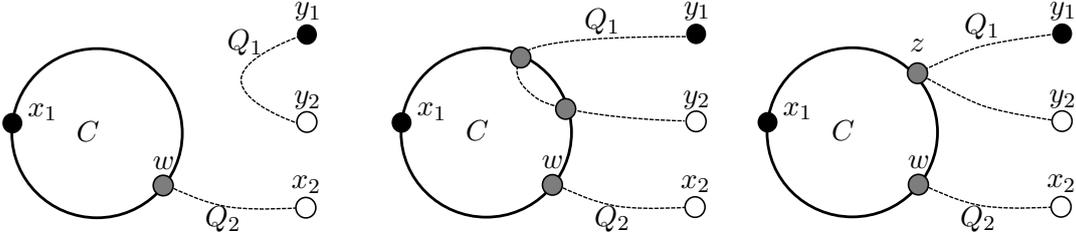}
    }
    \caption{Illustration for Case~5. Left: $Q_1$ is disjoint from $V(C)$, and then Lemma~\ref{lem:chord} applies. Middle: $Q_1$ contains at least two vertices on $C$, and there exist four distinct paths, so Lemma~\ref{lem:4disjoint} applies. Right: $Q_1$ contains exactly one vertex on $C$. We focus on this case below.}
    \label{fig:case5_1}
\end{figure}

Otherwise, $Q_1$ contains exactly one vertex on $C$, say $z$.
Suppose that $Q_2$ is between $w \in V(C) \setminus \{x_1\}$ and $x_2$.
Let $H_1 \coloneqq (C \cup Q_1[z, y_2] \cup Q_2) - \{x_1\}$ and $H_2 \coloneqq (C \cup Q_1[z, y_1] \cup Q_2) - \{x_2\}$.
By Lemma~\ref{extended-menger2}, one can find in linear time
\begin{itemize}
    \item a $y_1$--$V(H_1)$ $2$-fan in $G - \{x_1\}$, and
    \item a $y_2$--$V(H_2)$ $2$-fan in $G - \{x_2\}$.
\end{itemize}
Let $R_1, R_2$ be the $y_1$--$V(H_1)$ paths and $R_3, R_4$ be the $y_2$--$V(H_2)$ paths in the $2$-fans.

If $R_1$ or $R_2$ ends at $V(H_1) \setminus \{z, w\}$ or $R_3$ or $R_4$ ends at $V(H_2) \setminus \{x_1, z, w\}$, then Lemma~\ref{lem:3distinct-plus} completes the proof.
Moreover, if $R_3$ or $R_4$ ends at $x_1$, $R_3 \cup R_4$ plays a role of chord of the non-zero cycle $C$ intersecting both $x_1$ and $y_2$.
By splitting $C$ with this chord, one can obtain a non-zero cycle intersecting $x_1$ and $y_2$ and disjoint from $y_1$ and $x_2$, reducing this case to Case 4.
Thus, we may assume that $R_1$ and $R_3$ end at $z$ and $R_2$ and $R_4$ end at $w$.
Furthermore, if some pair of the four paths share an inner vertex, it contains a $y_1$--$y_2$ path disjoint from $C \cup Q_2$ and Lemma~\ref{lem:chord} applies again.
Thus, we may assume that they are internally disjoint; see Figure~\ref{fig:case5_2}.

Let $H_3 \coloneqq (C \cup R_1 \cup R_2 \cup R_3 \cup R_4) - \{y_2\}$.
By Lemma~\ref{extended-menger2}, one can find in linear time a $x_2$--$V(H_3)$ $2$-fan in $G - \{y_2\}$.
Let $S_1, S_2$ be the $x_2$--$V(H_3)$ paths in the $2$-fan.
If $S_1$ or $S_2$ ends at $V(H_3) \setminus \{x_1, z, w\}$, Lemma~\ref{lem:3distinct-plus} completes the proof.
Moreover, if $S_1$ or $S_2$ ends at $x_1$, $S_1 \cup S_2$ plays a role of chord of $C$ intersecting both $x_1$ and $x_2$.
By splitting $C$ with this chord, one can reduce this case to Case 4 again.
Thus, we may assume that $S_1$ ends at $z$ and $S_2$ ends at $w$.
Note that the six paths do not share any inner vertices by definition.

\begin{figure}[t]
    \centering
    \scalebox{1.0}{
        \def\svgwidth{0.9\columnwidth}
        \input{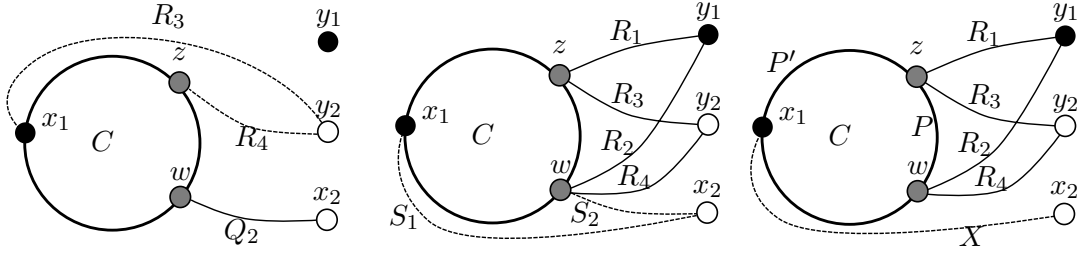}
    }
    \caption{Illustration for Case~5. Left: $R_3 \cup R_4$ splits $C$ into two cycles at least one of which is non-zero. Middle: $S_1 \cup S_2$ splits $C$ into two cycles at least one of which is non-zero. Right: An $x_2$--$x_1$ path $X$ together with $R_1$, $R_3$, $R_4$, and $P$ yields the conclusion.}
    \label{fig:case5_2}
\end{figure}

Finally, since $G$ is triconnected, $G - \{z, w\}$ contains an $\{x_2, y_2\}$--$\{x_1, y_1\}$ path $X$.
By Lemma~\ref{first-hit}, we may assume that $X$ passes through neither $e_1$ nor $e_2$, and hence is in $G'$.
By symmetry, we may assume that $X$ starts at $x_2$.
If the first vertex of $(C \cup R_1 \cup R_2 \cup R_3 \cup R_4) - \{z, w\}$ on $X$ is not $x_1$, Lemma~\ref{lem:3distinct-plus} completes the proof.
Otherwise, $X$ is an $x_2$--$x_1$ path disjoint from $(C \cup R_1 \cup R_2 \cup R_3 \cup R_4) - \{x_1, z, w\}$; see Figure~\ref{fig:case5_2} again.
Let $P$ be the $z$--$w$ path on $C$ disjoint from $x_1$.
If the two $y_1$--$y_2$ paths $Y_1 = R_1 \cup R_3$ and $Y_2 = R_1 \cup P \cup R_4$ have distinct labels, we obtain desired path pairs $\{X, Y_1\}$ and $\{X, Y_2\}$.
Otherwise, $0 = \psi(Y_1) + \psi(Y_2) = \psi(R_3) + \psi(P) + \psi(R_4)$.
Thus, letting $P'$ be the $z$--$w$ path on $C$ intersecting $x_1$, we obtain a non-zero cycle $C' = R_3 \cup P' \cup R_4$ intersecting $x_1$ and $y_2$ but not $y_1$ and $x_2$, reducing this case to Case 4.

\subsubsection{Case 6: When none of $x_1, y_1, x_2, y_2$ is on $C$}
This case reduces to one of Cases 1--5.
Take an $x_1$--$V(C)$ $3$-fan in $G$ by Lemma~\ref{extended-menger2}.
If both $e_1$ and $e_2$ are used in the $3$-fan, two paths among them form a $V(C)$--$V(C)$ path in $G$ passing through both $e_1$ and $e_2$ whose inner vertices are disjoint from $V(C)$.
Then, Lemma~\ref{lem:chord} completes the proof.
Otherwise, an $x_1$--$V(C)$ $2$-fan in $G'$ is obtained from the $3$-fan.
This forms a $V(C)$--$V(C)$ path $Q$ intersecting $x_1$ whose inner vertices are disjoint from $V(C)$.
Then, by splitting $C$ with $Q$, one can obtain a non-zero cycle intersecting $x_1$, reducing this case to one of Cases 1--5; see Figure~\ref{fig:case6}.

\begin{figure}[t]
    \centering
    \vspace{3mm}
    \scalebox{1.0}{
        \def\svgwidth{0.8\columnwidth}
        \input{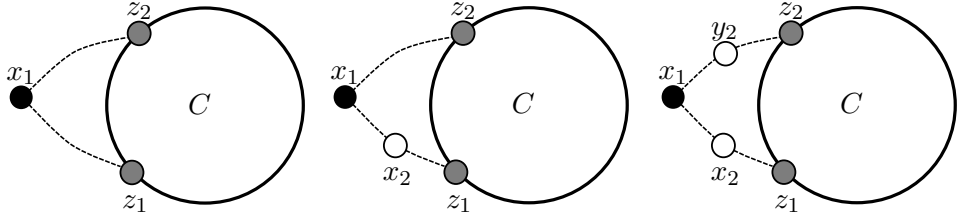}
    }
    \caption{Illustration for Case 6. The dashed path $Q$ splits $C$ into two cycles intersecting $x_1$ at least one of which is non-zero.}
    \label{fig:case6}
\end{figure}

\bigskip
Thus, we complete the proof of Lemma~\ref{lem:kernel}.
Recall that by Lemmas~\ref{lem:shared-endpoint} and \ref{lem:2-path-transform}, our remaining task for proving Theorem~\ref{triconnected-problem} is to find two path pairs of distinct labels in linear time.
Lemma~\ref{lem:kernel} gives a solution to this with a complete characterization \eqref{eq:decision} of the yes instances (which extends Theorem~\ref{thm:DN2023}).
Combining with the discussion just after stating Theorem~\ref{triconnected-problem}, we conclude the proof of Theorem~\ref{thm:main}.

\section*{Acknowledgments}
We thank the anonymous reviewers of the earlier version for their helpful comments.

This work was conducted in part through Experts in Information Science Program in National Institute of Informatics, implemented under JST Science and Technology Challenge Program for Next Generation (STELLA).
This work was also supported by JSPS KAKENHI Grant Number JP25H01114, JST CRONOS Japan Grant Number JPMJCS24K2, and JST ASPIRE Japan Grant Number JPMJAP2520.

\subsection*{AI Declarations}
The authors used generative AIs to assist in manuscript review and language editing.
All mathematical arguments, references, and suggested revisions were verified by the authors.
The authors are responsible for all descriptions in the submitted manuscript.

\bibliographystyle{plain}
\bibliography{ref}

\end{document}